\definecolor{defblue}{rgb}{0.121,0.47,0.705}
\definecolor{linkblue}{rgb}{0.098,0.098,0.4392}
\let\emph\relax
\DeclareTextFontCommand{\emph}{\color{defblue}\em}
\DeclareTextFontCommand{\bl}{\color{defblue}}
\renewcommand{\orcidID}[1]{\href{https://orcid.org/#1}{\includegraphics[scale=.03]{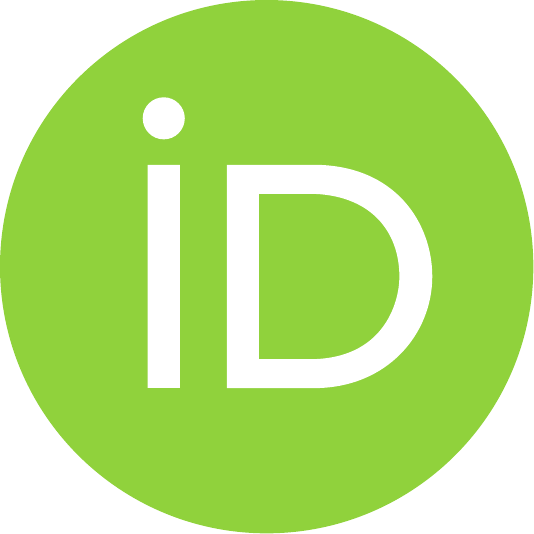}}}
\newcites{app}{References (Appendix only)}
\def\BState{\State\hskip-\ALG@thistlm}
\newcommand{\restateref}[1]{\IfAppendix{\hyperref[#1]{$\star$}}{\hyperref[#1*]{$\star$}}}
\let\doendproof\endproof
\renewcommand\endproof{~\hfill$\qed$\doendproof}
\spnewtheorem*{proofWithMath}{Proof}{\itshape}{\rmfamily}
\newcommand{\lc}{\textsc{Layered\-Crown}\xspace}
\newcommand{\mlc}{\textsc{Max-\lc}\xspace}
\newcommand{\mmlc}{\textsc{(Max-)\lc}\xspace}
\newcommand{\ilc}{\textsc{Int\-\lc}\xspace}
\newcommand{\milc}{\textsc{Max-\ilc}\xspace}
\newcommand{\true}{\texttt{true}\xspace}
\newcommand{\false}{\texttt{false}\xspace}
\newcommand{\Oh}{\mathcal{O}}
\DeclareMathOperator{\ALG}{ALG}
\DeclareMathOperator{\OPT}{OPT}
\DeclareMathOperator{\poly}{poly}
\begin{document}
\title{On Layered Area-Proportional Rectangle Contact Representations}

\author{Carolina Haase\inst{1}\orcidID{0000-0001-6696-074X} \and
Philipp Kindermann\inst{1}\orcidID{0000-0001-5764-7719}}

\institute{Universit\"at Trier, Trier, Germany\\
\email{\{haasec,kindermann\}@uni-trier.de}}

\maketitle              %
\begin{abstract}
Semantic word clouds visualize the semantic relatedness between the words of a text by placing pairs of related words close to each other. Formally, the problem of drawing semantic word clouds corresponds to drawing a rectangle contact representation of a graph whose vertices correlate to the words to be displayed and whose edges indicate that two words are semantically related.
The goal is to maximize the number of realized contacts while avoiding any false adjacencies.
We consider a variant of this problem that restricts input graphs to be layered and all rectangles to be of equal height, called \textsc{Maximum Layered Contact Representation Of Word Networks} or \mlc, as well as the variant \milc, which restricts the problem to only rectangles of integer width and the placement of those rectangles to integer coordinates.

We classify the corresponding decision problem $k$-\ilc as NP-complete even for triangulated graphs and $k$-\lc as NP-complete for planar graphs. We introduce three algorithms: a 1/2-approximation for \mlc of triangulated graphs, and a PTAS and an XP algorithm for \milc with rectangle width polynomial in $n$.
\end{abstract}
\section{Introduction}
Word clouds can be used to visualize the importance of (key-)words in a given text. Usually, words will be scaled according to their frequency and, in case of semantic word clouds, arranged in such a way that closely related words are placed closer together than words that are unrelated. 
There are multiple tools like Wordle\footnote{At the time of writing, the tool (usually found at \url{http://www.wordle.net/}) is not available, but the creator states on their website (\url{https://mrfeinberg.com/}) that they have ``hopes to bring it back to life''.} \cite{Viegas2009}, which was launched in 2008 by Jonathan Feinberg, 
that allow for automized drawing of classical word clouds, i.e., word clouds that disregard semantic relatedness; see \cref{img:wordclouds} for an example.

However, classical word clouds have certain disadvantages, as they are frequently misinterpreted. This has been analyzed in a survey conducted by Viegas et al.~\cite{Viegas2009}: different colors and positioning of words give the impression to bear meaning, even if they don't.
For this reason, it makes sense to pay special attention to \emph{semantic} word clouds, which resolve these 
shortcomings by placing related words closely together and sometimes using color to indicate, for example,  
clusters of semantically related words. 
Semantic relatedness, in this case, can be measured by how often two words occur together in the same sentence~\cite{Barth2014}.

Tools to generate semantic word clouds are, however, not as widely available. One such tool can be found online at \url{http://wordcloud.cs.arizona.edu} that implements different algorithms for semantic word clouds~\cite{Barth:14,Barth2014,Bekos2014}. A semantic word cloud generated by the tool 
is shown in \cref{img:wordclouds}. 
In the given example, the placement of words was calculated using cosine similarity. 
Compared to the classical word cloud generated using the same tool, with the same coloring for clusters, but a greedy, randomized approach to place words, the advantages of arranging words semantically become quite clear.

\begin{figure}[!t]
	\centering
	\includegraphics[scale=0.2]{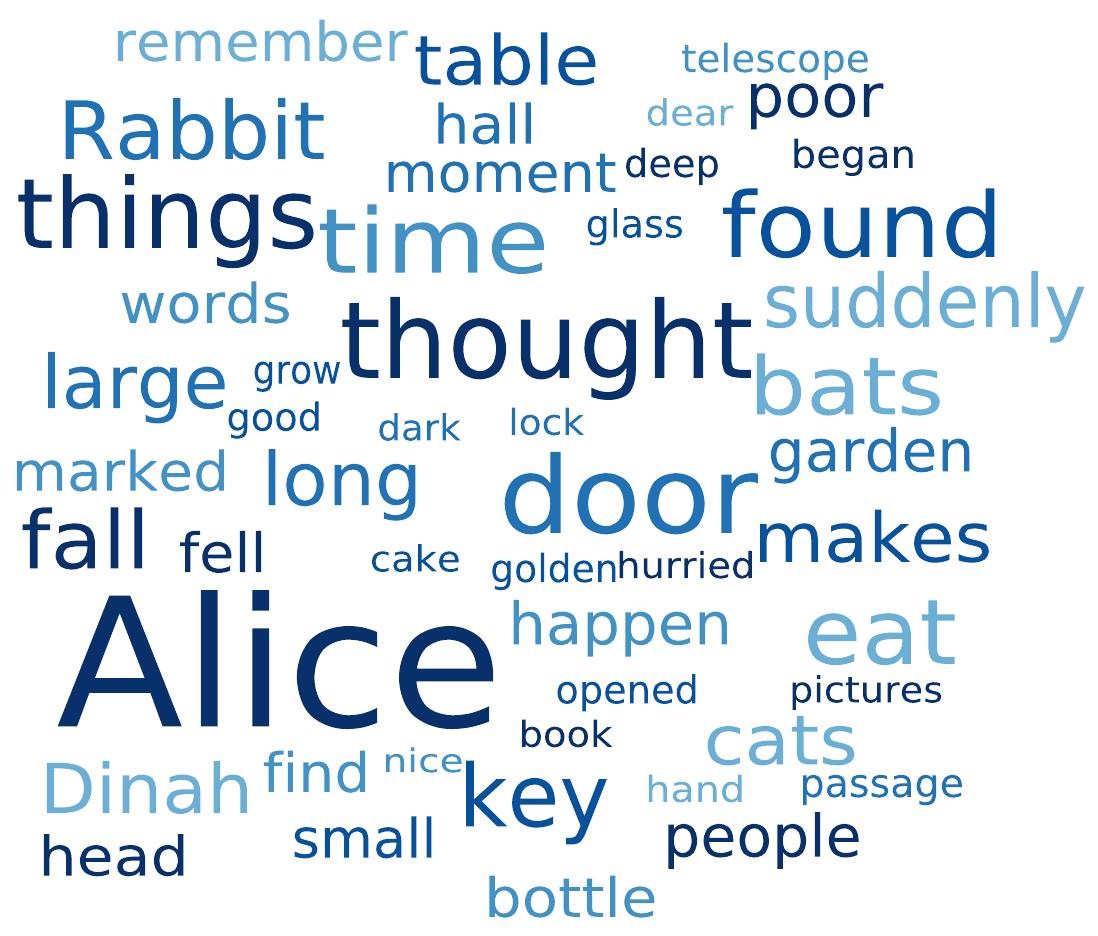}
	\includegraphics[scale=0.3]{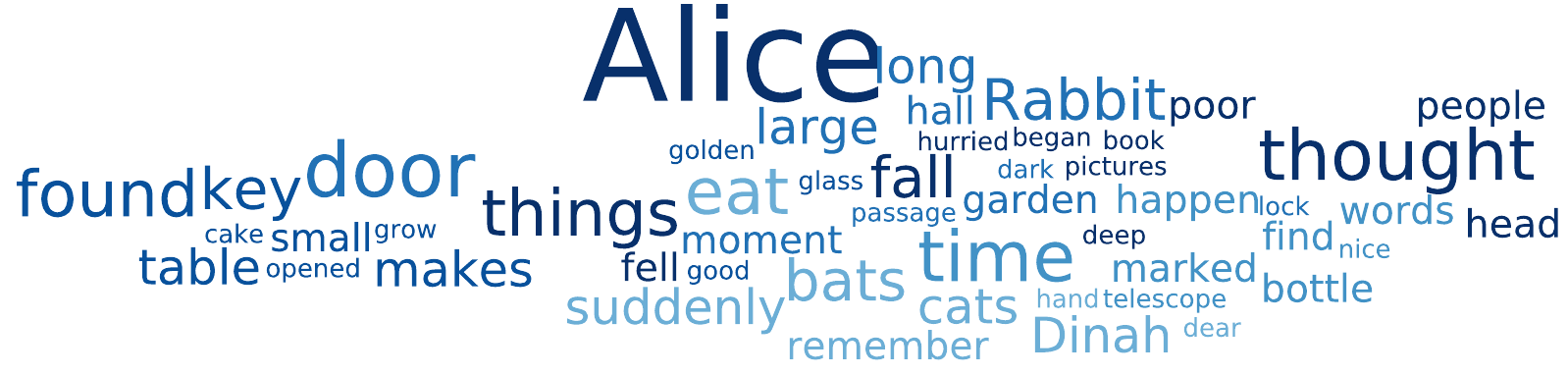}
	\caption{Randomly arranged word cloud (left) and semantic word cloud (right), generated using the first chapter of ``Alice's Adventures in Wonderland'' by Lewis Carroll.}
	\label{img:wordclouds}
\end{figure}

\paragraph{Problem statement.}
To formalize the problem of drawing semantic word clouds, Barth et al.~\cite{Barth:14} introduced the problem \bl{\textsc{Contact Representation Of Word Networks} (\textsc{Crown})}.
Given a graph $G=(V,E)$, where every vertex $v_i$ of $G$ corresponds to a word of width $w_i$ and height $h_i$, and 
every (weighted) edge between two vertices indicates the level of semantic relatedness between the corresponding 
words, the goal is to draw a contact representation where each vertex $v_i$ is drawn as an axis-aligned rectangle
of width $w_i$ and height $h_i$ such that bounding boxes of semantically related words touch.

In this paper, we consider a more restricted variant of the problem, which we will call \bl{\mmlc}, that has been introduced by Nöllenburg et al.~\cite{Noellenburg:21}.
Here, the input graph is planar and the vertices are assigned to layers. Furthermore, all 
bounding boxes have the same height. The goal is to maximize the number of contacts between semantically related 
words, while words that are not semantically related are not allowed to touch.

More formally, the problem is defined as follows. Let $G = (V,E)$ be a planar vertex-weighted \emph{layered graph} with $L$ 
layers, i.e., each vertex is assigned to one of $L$ layers.
The order of vertices within a layer is fixed, i.e., each vertex
$v_{i,j}$ can be identified by its layer $1 \leq i \leq L$ and its position $j$ within the layer. Edges can only exist between
neighboring vertices $v_{i,j}, v_{i,j+1},$ on the same layer and between vertices on adjacent layers. 
Like Nöllenburg et al., we consider the case that the edges are unweighted.
To each vertex $v$ we assign an axis-aligned unit-height rectangle $R(v)$ with width $w(v)$, given by the weight of 
the vertex. 
We will also use the notation $R_{i,j}=R(v_{i,j})$ and $w_{i,j}=w(v_{i,j})$; see \cref{img:layered-crown}.
The goal is to calculate the position $x_{i,j}$ for each vertex $v_{i,j}$, where $x_{i,j}$ denotes the $x$-coordinate 
of the bottom left corner of $R_{i,j}$, in such a way that rectangles do not overlap except on their boundaries. We 
call such an assignment a \emph{representation}.
Two rectangles $R(v)$ and $R(u)$ touch if their intersection is a line segment of length 
$\varepsilon > 0$. In this case, we say that $R(v)$ and $R(u)$ are in \emph{contact}.
An edge $\{v,u\}$ is \emph{realized} if $R_v$ and $R_u$ are in contact. We call a contact \emph{horizontal} if $R_v$ 
and $R_u$ are neighbors on the same layer and \emph{vertical} if $R_v$ and $R_u$ are on adjacent layers.
Contacts between rectangles whose vertices are not adjacent are not allowed and are called \emph{false adjacencies}. 
Representations with false adjacencies are \emph{invalid}; otherwise, they are \emph{valid}. 
Gaps between vertices 
$v_{i,j}, v_{i,j+1}$ on the same layer are allowed.

The maximization problem \bl{\textsc{Maximum Layered Contact Representation of Word Networks} (\mlc)} is to find a 
valid representation for a given graph $G$ such that the number of realized contacts is maximized.  The respective 
decision problem \bl{\textsc{Layered Contact Representation of Word Networks} ($k$-\lc)} is to decide whether there 
exists a valid contact representation that realizes at least $k$ contacts.
Many fonts are monospaced, i.e., all letters and characters occupy the same amount of horizontal space. Thus, we also consider the further restriction  that rectangles may only be of integer width and may only be placed with 
their lower left corner on integer coordinates. This implies that two rectangles are in contact if and only if the 
intersection of their boundaries is a line segment of positive integer length. 
We call those problems \bl{\milc} and \bl{$k$-\ilc}.

For information about graph drawing and parameterized complexity in general, we refer to books~\cite{DiBattista:99,Tamassia:13,Cygan2015,Downey:13%
} and \cref{ch:prelim}.

\begin{figure}[!t]
	\includegraphics{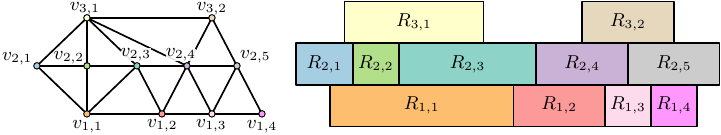}
	\caption{Internally triangulated graph with 3 layers (left) and a contact representation (right).}
	\label{img:layered-crown}
\end{figure}

\paragraph{Related work.}
Barth et al.~\cite{Barth:14} have shown that \textsc{Crown} is strongly NP-hard even when restricted to trees and weakly NP-hard even when restricted to stars, but can be solved in linear time on irreducible triangulations. They also provided constant-factor approximation algorithms for several graph classes like stars, trees, and planar graphs. These were improved by Bekos et al.~\cite{Bekos2014} and partially implemented and compared to other algorithms by Barth et al~\cite{Barth2014}.

Another variant of \textsc{Crown}, called \textsc{Hier-Crown}, restricts the input to be a directed acyclic graph with a single source and a plane embedding.
\textsc{Hier-Crown} can be solved in polynomial time, but can be shown to become weakly NP-complete if rectangles are allowed to be rotated  \cite{Barth:14}.

Barth et al.~\cite{Barth:14} further introduced another variant called \textsc{Area-Crown}, where the optimization goal shifts from maximizing rectangle contacts to minimizing the area of a bounding box containing the contact representation.
They show that this problem is NP-hard, even if restricted to paths.

Nöllenburg et al.~\cite{Noellenburg:21} introduced \mlc, but they only considered triangulated graphs. They gave a linear-time algorithm for triangulated graphs with only 2 layers and proposed an ILP-formulation for triangulated graphs with more than 2 layers. They further showed how to solve \textsc{Area-\lc} in polynomial time with a flow formulation.

Espenant and Mondal~\cite{Espenant:22} study \textsc{StreamTables}, where one seeks to visualize a matrix such that each cell is drawn as a rectangle of a specified area, cells in the same row have uniform height and align horizontally, while maximizing contacts and/or minimizing excess area. Their model is similar to \lc on grids, but false adjacencies are not forbidden, point contacts count as realized edges, and rows can generally be permuted. 

A larger overview on different kinds of word clouds and algorithms to solve them can be found in~\cref{ch:related}.

\paragraph{Our contribution.}
In this work, we study the computational complexity of \ilc and algorithms for \mlc and \milc.
In \cref{ch:complexity}, we classify $k$-\ilc as an NP-complete problem even for triangulated graphs, using a reduction from \textsc{Planar Monotone 3-Sat}. We will then adjust the proof to show NP-completeness for $k$-\lc for planar graphs.
In \cref{ch:algorithms}, we present a 1/2-approximation for \mlc on triangulated graphs (\cref{sec:2-approx}) and formulate a dynamic program for \milc that is an XP algorithm if the maximum rectangle width is polynomial in $n$ (\cref{sec:dp-alg}). Finally, we combine the ideas of the two algorithms to formulate a polynomial-time approximation scheme for \milc if the maximum rectangle width is polynomial in~$n$ (\cref{sec:ptas}).
We conclude with a list of research questions in Section \ref{ch:discussion}.
\todo{Hier sollten wir Mitte Seite 4 sein.}

\section{NP-completeness of $k$-\ilc}\label{ch:complexity}
In this section, we prove that $k$-\ilc is NP-complete. We first show that $k$-\ilc lies in NP.

\begin{lemma}\label{lem:lp-crown-np}
	$k$-\ilc lies in NP.
\end{lemma}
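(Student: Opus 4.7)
The plan is to exhibit a polynomial-size certificate (the placement itself) together with a polynomial-time verification procedure.

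First I would argue that the $x$-coordinates of an optimal representation can be chosen so that each $x_{i,j}$ lies in a polynomially bounded range. Since each rectangle has integer width $w_{i,j}$, and since in any valid representation we can without loss of generality left-shift the leftmost rectangle of each layer so the entire layout fits in $[0, W]$ where $W = \sum_{i,j} w_{i,j}$, each coordinate $x_{i,j}$ is a non-negative integer bounded by $W$. Hence each $x_{i,j}$ has bit-length $\Oh(\log W)$, which is polynomial in the input size (whether the widths are encoded in unary or binary). The guessed certificate is thus the tuple $(x_{i,j})_{i,j}$, of total size polynomial in $|G|$.

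Next I would describe the verification. Given the candidate coordinates, the verifier performs the following polynomial-time checks:
\begin{enumerate*}[label=(\roman*)]
  \item for every pair of rectangles $R_{i,j}, R_{i,j'}$ on the same layer, check that their open interiors are disjoint;
  \item for every pair of rectangles on adjacent layers $R_{i,j}, R_{i+1,j'}$, compute the length of the intersection of their boundaries;
  \item whenever that length is positive, verify that the corresponding pair $\{v_{i,j}, v_{i+1,j'}\}$ is an edge of $G$ (otherwise reject, as this is a false adjacency), and analogously for horizontal neighbours within a layer;
  \item finally, count all realized contacts (boundary intersections of positive integer length) and check that the total is at least $k$.
\end{enumerate*}
Each of these checks can be performed in $\Oh(n^2)$ arithmetic operations on numbers of polynomial bit-length, so verification runs in polynomial time.

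I do not expect a genuine obstacle here; the only subtle point is justifying the polynomial bound on the coordinates, which I would handle by the left-shifting argument above so that one does not have to worry about extraneous whitespace blowing up the certificate size. The rest is bookkeeping: enumerate all $\Oh(n^2)$ pairs, test for overlap, contact, and false adjacency, and sum the realized edges.
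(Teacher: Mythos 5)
Your proposal is correct and follows essentially the same route as the paper, whose own proof is just the observation that a given placement can be checked in polynomial time for validity and for at least $k$ realized contacts; your additional justification that the certificate (the integer coordinates) has polynomial bit-length is a detail the paper simply glosses over. One small correction: a valid representation need not fit in $[0,W]$ (two width-$1$ rectangles on one layer with no edge between them already force a gap, so after compacting every empty vertical strip to width $1$ one only gets a bound like $W+n$), but this merely changes the per-coordinate bound to $\Oh(\log(W+n))$ bits and does not affect your argument.
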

\begin{proof}
	For a given contact representation of a layered graph $G$, one can verify in polynomial time if the representation is valid and whether at least $k$ contacts are realized. Thereby, $k$-\ilc is a member of the class NP.
\end{proof}

We prove NP-hardness by reducing from \textsc{Planar Monotone 3-Sat}, which is NP-complete~\cite{DeBerg:12}.
Let $B$ be a boolean formula in conjunctive normal form (CNF) and $X = \{x_1,\dots,x_n\}$ its variable set. 
That is, $B = C_1 \land C_2 \land \dots \land C_m$ is a conjunction of clauses $C_i$, where a clause is a disjunction of literals and a literal is defined as either $x$ or $\overline{x}$ for a variable $x \in X$.\todo{brauchen wir die Variablennamen?}
In \textsc{Planar Monotone 3-Sat}, all clauses consist of at most three literals and are either \emph{positive} (they only contain positive literals) or \emph{negative} (they only contain negative literals), and the variable-clause incidence graph can be drawn such that
\begin{enumerate*}[label=(\roman*)]
    \item it is crossing-free;
    \item all variable vertices lie on the $x$-axis;
    \item all positive clause vertices lie above the $x$-axis; and
    \item all negative clause vertices lie below the $x$-axis.
\end{enumerate*}

We construct a vertex-weighted layered graph $G$ whose contact representation closely resembles the rectilinear representation of~$B$.
	To this end, we use gadgets to represent variables and clauses, as well as an additional gadget to split/duplicate variable values. Just as in the rectilinear representation, vertices representing variable gadgets are aligned horizontally, and positive clauses are drawn above, while negative clauses are drawn below the variable gadgets. The goal is for $G$ to have a valid contact representation if and only if $B$ is satisfiable. We choose $k$ as the maximum number of possible contacts in our construction.

\paragraph{Variable gadget.}
A variable gadget consists of five vertices $v_{l_1}, v_{l_2}, v_{m}, v_{r_2}, v_{r_1}$
that each have a rectangle width of~$1$ on layer $1$, as well as three vertices 
$u_{l}, u_{m}, u_{r}$ on layer $0$. The rectangles $R(u_l)$ and $R(u_r)$ both have 
width~$2$, $R(u_m)$ has width~$1$; see \cref{img:variable_gadget}.
As edges between the layers we add $u_{l}v_{l_1}, u_{l}v_{l_2}, u_{l}v_{m}, u_{l}v_{r_2}$,
$u_{m}v_{r_2}$, $u_{r}v_{r_1}$, and $u_{r}v_{r_2}$.
Note that there is no edge between $v_{m}$ and $u_{m}$, and the corresponding rectangles are therefore not allowed to touch. We want to use this to create a gap in each layer, which will allow us to assign opposite variable values above and below the gadget, thus realizing the notion of positive clauses above and negative clauses below the variable gadgets.

For the gadget to work as intended we need additional walls on either side. Walls are constructed from three rectangles of width $1$ per layer. Edges are added in such a way that moving any wall rectangle to either side reduces realized contacts by at least one and/or introduces false adjacencies; see \cref{img:walls}. 

To determine variable values, we add vertices $v_{x}$ and $u_{x}$ of rectangle width $3$ to layers $2$
and $-1$, respectively, with edges to all vertices of the variable gadget and the innermost wall vertices on the adjacent layers. 
Since $u_{m}$ and $v_{m}$ are not allowed to touch, they split 
the rectangles on layers $0$ and $1$ into two blocks of rectangles of width~$3$ 
and~$2$, respectively. To maximize contacts, both $v_{x}$ and $u_x$ 
have to realize vertical contacts to the larger block of width 3 and
a horizontal contact to a wall vertex. Since the blocks of width $3$ on 
layers $0$ and~$1$ are in contact with opposite walls, so are $v_{x}$ and 
$u_x$. We interpret a variable assignment as follows: if $v_{x}$
 realizes contacts to $v_{l_1}$ and $v_{l_2}$, and $u_x$ realizes a 
contact to $u_{r}$, the assigned value of the variable is \true, otherwise \false; 
see \cref{img:variable_gadget-true,img:variable_gadget-true_graph,img:variable_gadget-false_graph}.

Note that $u_{l}$ could also realize contacts to $v_{l_2}, v_{m}, u_{m}$ 
instead of $v_{l_1}, v_{l_2}$ and a wall vertex; see \cref{img:variable_gadget-alt_1}. However, this does not change the position of $u_x$ and can therefore be 
disregarded. The same holds for vertices $v_{r_2}, v_{r_1}$, which could be 
moved to the left by one without changing the number of realized contacts; see
\cref{img:variable_gadget-alt_2}. Every other valid placement of 
vertices results in the variable gadget to be wider and thus realize less 
contacts; see for example \cref{img:variable_gadget-fewer}.

\begin{figure}[!t]
	\centering
	\subcaptionbox{\label{img:variable_gadget-true}}{\includegraphics[page=2]{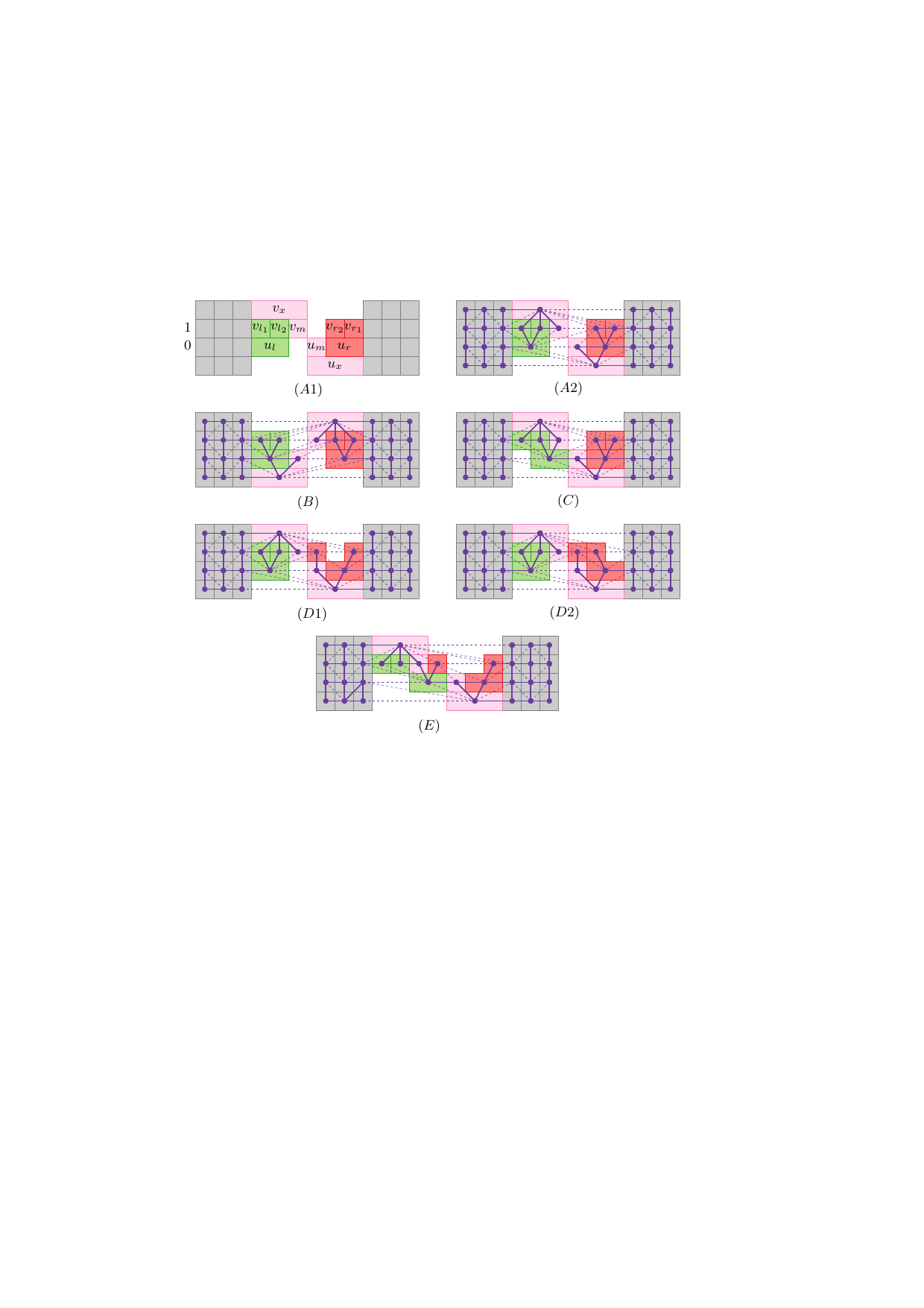}}\hfill
	\subcaptionbox{\label{img:variable_gadget-true_graph}}{\includegraphics[page=3]{images/tri_variable_4.pdf}}\hfill	
	\subcaptionbox{\label{img:variable_gadget-false_graph}}{\includegraphics[page=4]{images/tri_variable_4.pdf}}
	
	\subcaptionbox{\label{img:variable_gadget-alt_1}}{\includegraphics[page=5]{images/tri_variable_4.pdf}}
	\hfill
	\subcaptionbox{\label{img:variable_gadget-alt_2}}{\includegraphics[page=7]{images/tri_variable_4.pdf}}\hfill	
	\subcaptionbox{\label{img:variable_gadget-fewer}}{\includegraphics[page=8]{images/tri_variable_4.pdf}}	
	
	\caption{(a) Contact representation and (b) underlying graph for a variable gadget with variable assignment \true; 
	(c) variable gadget with variable assignment \false;
	(d,e) alternative representations with the same number of realized contacts;
	(f) valid representation realizing fewer contacts.}
	\label{img:variable_gadget}
\end{figure}

In order to use variable values within multiple clauses, we will have to 
propagate them; see \cref{img:propagate}. We do so by adding alternating rows of 
five vertices with rectangle width $1$ and rows of three vertices of width $2$, 
$1$, and $2$, essentially repeating the pattern we used for the variable 
gadget. The difference is that this time the middle rectangles have edges to 
their counterparts in adjacent rows and are therefore allowed to touch. Thus, 
the gap stays as assigned by the variable gadget. We can proceed to add 
vertices $v_{x}$ and $u_x$ as before. 

\begin{figure}[!t]
	\centering
	\subcaptionbox{\label{img:walls}}{\includegraphics{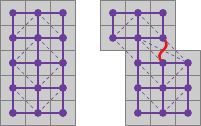}}
	\hfill
	\subcaptionbox{\label{img:propagate}}{\includegraphics{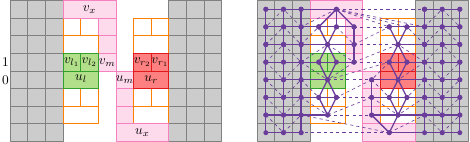}}
	\caption{(a) Moving wall vertices leads to false adjacencies (red curve) and (b) propagating variable values.}
\end{figure}

\paragraph{Clause gadget.}
Let $C$ be a clause that contains variables $x_a,x_b,x_c$ in $B$.
Recall that all clauses above the variable layer are positive while all clauses below the 
variable layer are negative, and the variable gadgets propagate the positive
variable assignment to the top and the negated variable assignment to the bottom.

Assume that $x_a, x_b, x_c$ occur in this order. To determine whether a clause is satisfied, we use a \emph{slider} vertex $v_s$.
The slider shall realize 4 contacts if the variable assignments 
satisfy $C$, and 3 contacts otherwise. 
The slider has rectangle width $2$ and can therefore only %
be in contact 
with one variable gadget at a time.

We describe the clause gadget for the case that $C$ is a negative clause; see \cref{img:clause_gadget}. 
The other case is symmetric. Suppose that the propagation of the variable assignments for 
$x_a, x_b, x_c$ ends with vertices $u_{a}, u_{b}, u_{c}$ on layer $i$.
On layer $i-1$, we place $v_s$ and continue the outermost walls with two vertices 
of rectangle width 3. 
On layer $i-2$, we add vertices 
$v_{l_1}, v_{l_2}, v_{t_1}, v_{b_1}, v_{t_2}, v_{b_2}, v_{t_3}, v_{b_3}, v_{r_1}, v_{r_2}$
in this order to close the bottom of the gadget. 
The rectangles $R_{l_1}, R_{l_2}, R_{r_1}, R_{r_2}$ have width~$1$; 
$R_{t_1}, R_{t_2}, R_{t_3}$ have width $2$. The width of 
$R_{b_1}, R_{b_2}, R_{b_3}$ is set such that the remaining space is filled and 
$v_{t_1}, v_{t_2}, v_{t_3}$ are each placed on the leftmost position 
underneath a variable gadget, i.e., on the side of the positive-valued 
variable propagation. Edges exist from $v_s$ to most vertices of the gadget 
on adjacent layers %
such that the triangulation is preserved and 
$v_s$ can be placed freely along the whole width of the gadget. For the exact edges, refer to \cref{img:clause_gadget}.

The only ways for $v_s$ to realize four contacts are the following.
\begin{enumerate*}[label=(\roman*)]
	\item it touches $v_{t_1}$ and $v_{b_1}$ at the bottom, the wall at the left, and $u_{a}$ at the top, if $x_a$ has a negative variable assignment;
	\item it touches $v_{b_1}$ and $v_{t_2}$ at the bottom, $u_{b}$ and the wall left of $u_{b}$ at the top, if $x_b$ has a negative variable assignment; or
	\item it touches $v_{b_2}$ and $v_{t_3}$ at the bottom, $u_{c}$ and the wall left of $u_{c}$ at the top, if $x_c$ has a negative variable assignment.
\end{enumerate*}
Thus, $v_s$ only realizes four contacts if the variable assignment satisfies $C$.

\begin{figure}[!t]
	\centering
	\includegraphics{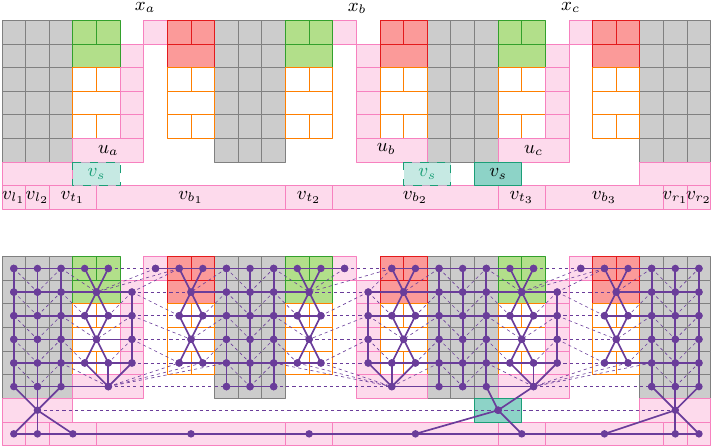}
	\caption{Contact representation (top) and underlying graph (bottom) for a clause gadget, including multiple examples of placements for $v_s$.
	Unrealized edges between $v_s$ and vertices of adjacent layers are omitted for readability.}
	\label{img:clause_gadget}
\end{figure}

\paragraph{Split gadget.}
To duplicate variable values that occur in multiple clauses, we use a split gadget;
see \cref{img:split_gadget}.
Let $x_a$ be a variable such that its variable
assignment ends at a vertex $v_{a}$. 
Recall that all clauses above the variable layer are positive while all clauses below
the variable layer are negative. Assume that $v_{a}$ lies above the variable layer, so the 
variable assignment has to be propagated to a positive clause; the other case is symmetric. 
In the split gadget, we want to
split the variable assignment of $x_a$ such that there are now two vertices $v'_{a}$ and 
$v''_{a}$ that realize the variable assignment of $x_a$. To this end, we create a second tunnel 
to the right of the tunnel that $v_{a}$ lies in and use a horizontal bar $v_m$ that makes sure 
that $v''_{a}$ must have variable assignment \false if $v_{a}$ has variable assignment 
\false; see \cref{img:split_gadget}. Note that the construction also allows $v''_{a}$ to have 
variable assignment \false if $v_{a}$ has variable assignment \true. 
However, this is not a 
problem since it will propagate the variable assignment to a positive clause; hence, this cannot
satisfy a clause that should be unsatisfied due to the variable assignment.
 More details are given in 
\cref{app:split_gadget}. 

\begin{figure}[!t]
	\centering
		\subcaptionbox{\label{img:split_gadget-contact}}{\includegraphics[page=2]{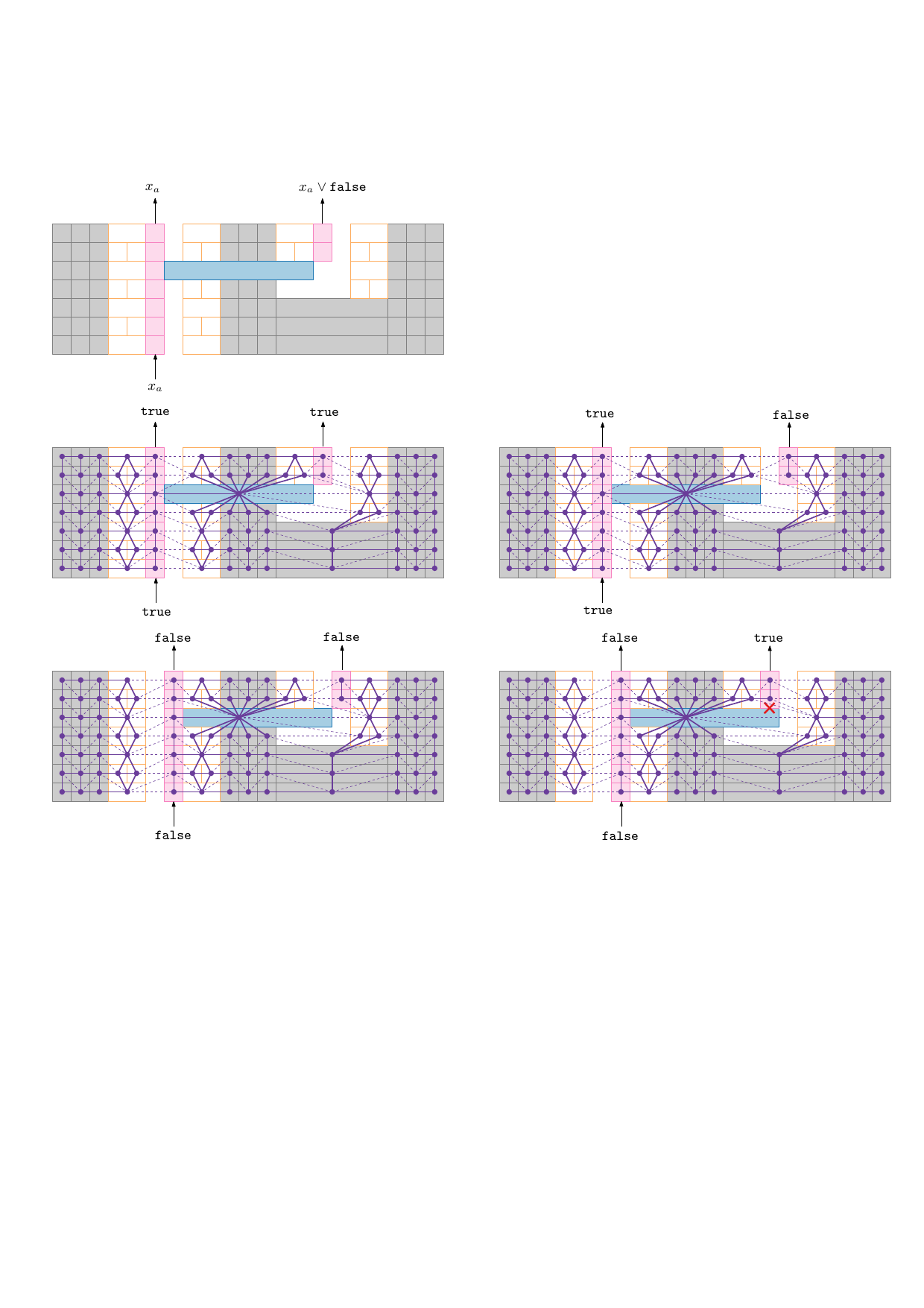}\vspace{-1.5em}}
    \hfill
    \subcaptionbox{\label{img:split_gadget-graph}}{\includegraphics[page=9]{images/tri_split_7.pdf}\vspace{-1.5em}}
	\caption{(a) Contact representation and (b) underlying graph for a split gadget.}
	\label{img:split_gadget}
\end{figure}

\paragraph{Combining the gadgets.}
Combining these gadgets, we construct a vertex-weighted layered graph $G$ for the planar monotone boolean formula $B$.
Let $h$ be the number of layers of $G$, and let $w$ be the minimum width of $G$ (i.e., the sum of rectangle widths 
among all layers). Obviously, any layered contact representation of $G$ has at most $w\cdot h$ contacts. To make
sure that the representation of $G$ has to be drawn inside a designated bounding box of width $w$ and height $h$, we add a \emph{frame} 
around $G$ consisting of walls of width $w\cdot h$ on the left and right and 
$w\cdot h$ stacked rectangles that span the whole width of $G$ at the top and bottom, creating a graph
$G^+$; see \cref{img:frame} in \cref{app:figures}.
Moving any rectangle of $G$ outside of the designated bounding box also moves parts of the frame and thus removes at 
least $w\cdot h$ contacts. We choose the number of desired contacts $k$ as the number of contacts that would be 
realized if every single gadget maximizes its number of contacts.
A full example can be seen in \cref{img:example_formula}.

Assume that we have a solution for $B$. For each variable, we draw the corresponding variable gadget of $G^+$ such 
that it represents the variable assignment of the solution, and we propagate the variable assignments along the
tunnels and split gadgets. Since the variable assignment satisfies all clauses, we can place $v_s$ at each clause
such that it has 4 contacts, thus maximizing the number of contacts at every gadget and obtaining $k$ contacts in total.

For the other direction, assume that we have a drawing of $G^+$ that realizes $k$ contacts. From each variable gadget, 
we can read the corresponding variable assignment. Since each clause gadget must have $v_s$ in a position
such that it has four contacts (otherwise, there cannot be $k$ contacts in total), every clause has
a satisfied literal. Together with \cref{lem:lp-crown-np}, this proves the following theorem.

\begin{theorem}
	$k$-\ilc is NP-complete for internally triangulated graphs.
\end{theorem}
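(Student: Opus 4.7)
The plan is to assemble the gadgets developed above into a polynomial-time many-one reduction from \textsc{Planar Monotone 3-Sat}, and to combine it with \cref{lem:lp-crown-np} for the NP-membership side. Given an instance $B$ together with its rectilinear variable--clause layout, I would place one variable gadget per variable along a common variable layer, propagate each variable assignment upward (as the positive literal) and downward (as the negated literal) through repeated tunnel rows, insert a split gadget whenever a variable appears in more than one clause on the same side, and terminate each tunnel with a clause gadget whose slider $v_s$ realizes four contacts precisely when one of its incident literals is satisfied. The entire construction is then wrapped in the frame $G^+$, so that any representation is forced into the $w \times h$ bounding box on pain of losing at least $w\cdot h$ frame contacts. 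The target $k$ is chosen as the sum over all gadgets of the number of contacts realized in their designed optimal configuration (including all frame contacts). Since the rectilinear layout of $B$ has size polynomial in $|B|$ and each gadget as well as each tunnel has size polynomial in $n+m$, the reduction runs in polynomial time; a separate routine check, gadget by gadget, verifies that every bounded face of $G^+$ is a triangle, so that $G^+$ is internally triangulated as required.

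For the forward direction, assuming $\alpha$ is a satisfying assignment of $B$, I would draw each variable gadget in the configuration of \cref{img:variable_gadget-true} or its mirror according to the value $\alpha$ assigns, propagate values through the tunnels and split gadgets in their maximum-contact configurations, and, for each clause, place $v_s$ against a propagated literal of one satisfying variable, as in one of the three slider positions listed just before \cref{img:clause_gadget}. A routine contact count per gadget then shows that exactly $k$ contacts are realized.

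The main obstacle, and the conceptual heart of the proof, is the reverse direction: showing that any valid representation of $G^+$ realizing at least $k$ contacts induces a consistent truth assignment that satisfies every clause. The pivotal tool is the frame, because displacing any rectangle outside the bounding box strips away at least $w\cdot h$ frame contacts, so every individual gadget must simultaneously realize its per-gadget maximum. I would then argue, by inspection of the variable gadget, that modulo the harmless alternatives of \cref{img:variable_gadget-alt_1,img:variable_gadget-alt_2}, only the two configurations in \cref{img:variable_gadget-true,img:variable_gadget-false_graph} achieve this maximum, so each variable gadget encodes a well-defined Boolean value; an analogous case analysis along each tunnel row and through each split gadget shows that this value is propagated consistently, with the one-sided asymmetry of the split gadget being harmless because any discrepancy only pushes a positive-side propagation spuriously to \false, which cannot turn an unsatisfied clause into a satisfied one. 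Finally, the slider in each clause gadget attains four contacts only in one of the three configurations identified in the clause analysis, each of which corresponds to a satisfied literal of that clause; hence the induced assignment satisfies every clause of $B$. Combined with \cref{lem:lp-crown-np}, this yields the theorem.
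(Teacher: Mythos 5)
Your proposal follows essentially the same route as the paper: NP-membership via \cref{lem:lp-crown-np}, a polynomial-time reduction from \textsc{Planar Monotone 3-Sat} using the variable, split, and clause gadgets with propagation tunnels, the frame $G^+$ to confine the drawing, the choice of $k$ as the sum of per-gadget maxima, and the two directions argued exactly as in the paper (satisfying assignment yields $k$ contacts; $k$ contacts force each slider into a satisfied-literal position). This matches the paper's argument, so no further comparison is needed.
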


\begin{figure}[t]
	\centering
	\includegraphics[width=\textwidth]{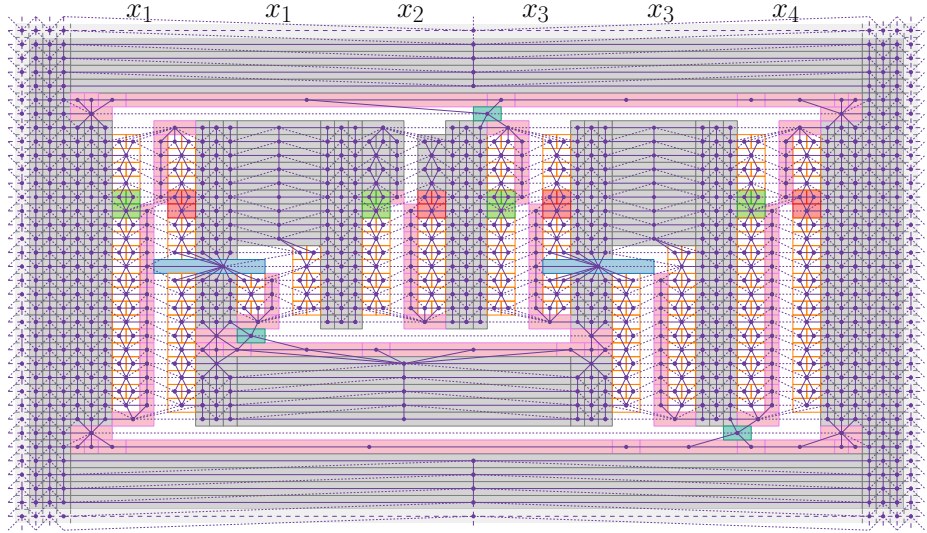}
	\caption{Contact representation for the boolean formula $B$ with variable set $\{x_1, x_2, x_3, x_4\}$, clauses $\{\overline{x_1},\overline{x_2}, \overline{x_3}\}, \{\overline{x_1}, \overline{x_3}, \overline{x_4}\}$ and $\{x_1, x_3, x_4\}$ and variable assignment $x_1 = \false, x_2 = \true, x_3 = \true, x_4 = \false$ (Edges between $v_s$ and above/below layer left out for readability purposes). A larger version can be found in \cref{app:figures}.}
	\label{img:example_formula}
\end{figure}

Note that the proof cannot be immediately extended to $k$-\lc, as placing rectangles on non-integer positions might lead to situations where a variable assignment flips; see \cref{img:nonint-counterex}. However, if we drop the requirement that the graph is triangulated, then we can adjust the construction by removing unwanted contacts from the graph, which leads to the following theorem. The details are given in \cref{app:klc-np-complete}.

\begin{figure}[t]
    \centering
    \includegraphics{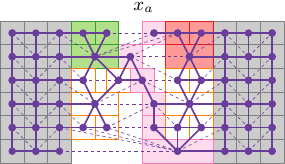}
    \caption{If non-integral positions are allowed, then variable assignments may flip.}
    \label{img:nonint-counterex}
\end{figure}

\begin{restatable}[\restateref{thm:lp-crown-np}]{theorem}{LPCrownNP}
    \label{thm:lp-crown-np}
	$k$-\lc is NP-complete.
\end{restatable}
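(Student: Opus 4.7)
The plan is to adapt the reduction of the previous theorem so that it produces a planar (rather than internally triangulated) graph, and to argue that the modified construction is robust against non-integer placements. Membership in NP is immediate: the same argument as in \cref{lem:lp-crown-np} applies, since given rectangle positions (with polynomially bounded bit-complexity coordinates), one can check in polynomial time both that no two non-adjacent rectangles touch and that at least $k$ adjacent pairs do.

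For NP-hardness, I would keep the overall reduction from \textsc{Planar Monotone 3-SAT} together with the variable, propagation, split, and clause gadgets developed for the integer case, but perform a careful \emph{edge-pruning} pass on each gadget. The pathology exhibited in \cref{img:nonint-counterex} is that, when coordinates are allowed to be fractional, a rectangle in a variable column can slide by less than one unit and thereby flip the variable assignment, because the relevant rectangles in the triangulated construction still touch along a positive-length segment after the shift, realizing a contact for an edge that was intended only to enforce rigidity. My aim is to delete each such triangulation edge whose sole purpose in the integer construction was to provide a contact witnessing a canonical placement. After deletion, the very same fractional shift produces a false adjacency and invalidates the representation. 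The resulting graph is still planar because I only delete edges from a planar graph.

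The steps in order would be: (i) catalogue, separately for each gadget, all non-integer placements that achieve the original contact count in the triangulated version but do not correspond to a consistent Boolean assignment; (ii) for each such placement pinpoint at least one edge whose realization relies on the fractional shift, and remove it; (iii) recompute the target contact count $k$ as the maximum achievable when all gadgets are placed canonically; (iv) argue forward direction (a satisfying assignment still yields a valid representation with $k$ contacts, now exactly because only canonical placements remain optimal) and backward direction (any valid representation with $k$ contacts must be canonical modulo a harmless global integer shift, and therefore encodes a satisfying assignment). The frame construction of $G^+$ with $wh$ stacked rectangles still confines the drawing to the intended bounding box, so width arguments about gadgets carry over verbatim.

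The main obstacle is step (iv) in the absence of triangulation. Triangulation gave the original gadgets their rigidity: each wall, tunnel, and slider had so many forced contacts that only canonical layouts could be optimal. Once edges are removed, one has to show that the remaining edge set still forces canonicity in two senses simultaneously: it must still \emph{allow} the intended optimum (so the removed edges were never realized there), and it must \emph{forbid} every non-canonical layout either by false adjacency or by a shortage of contacts. I expect this to require a gadget-by-gadget invariant proof, showing that any non-canonical fractional placement either introduces a false adjacency with a now-non-adjacent pair or loses a contact elsewhere, so that the threshold $k$ can only be met by essentially integer layouts. Once that is established, correctness of the reduction reduces to the analogous correctness argument of the integer/triangulated case, and together with membership in NP this proves the theorem.
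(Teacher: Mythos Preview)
Your proposal is correct and follows essentially the same approach as the paper: prune from each gadget those triangulation edges that are never realized in any canonical (integer) configuration, so that a fractional shift which would previously have picked up such an edge now produces a false adjacency instead, and then verify gadget-by-gadget that the attainable contact count is unchanged and still encodes the Boolean semantics. The paper's sketch differs only in presentation---it specifies directly which edges to \emph{keep} in the variable, clause, and split gadgets rather than cataloguing the bad fractional placements first---but the underlying idea and the resulting construction are the same.
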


\section{Parameterized and Approximation Algorithms}\label{ch:algorithms}
In this section, we provide parameterized and approximation algorithms. As a warmup (\cref{sec:2-approx}), we first describe a 1/2-approximation for \mlc on triangulated graphs. 
We then focus on \milc with the additional constraint that the maximum rectangle width is at most polynomial in $n$. Note that practical instances of \milc will always have bounded maximum rectangle width, as each rectangle corresponds to a word, and words have an upper limit of letters in most languages (in fact, the longest word in an English dictionary, has 45 letters: \emph{pneumonoultramicroscopicsilicovolcanoconiosis}).
We first describe an XP-algorithm based on a dynamic program (\cref{sec:dp-alg}), which we then use to obtain a PTAS (\cref{sec:ptas}).

\subsection{1/2-Approximation Algorithm for \mlc}\label{sec:2-approx}

We show that a 1/2-approximation exists by describing an algorithm that uses the following Lemma, proposed by Nöllenburg et al.~\cite{Noellenburg:21}.
\begin{lemma}[\cite{Noellenburg:21}, Theorem 2\label{lem:2-layer-alg}]
	A contact-maximal valid representation 
    for a given triangulated 2-layer graph can be computed in linear time.
\end{lemma}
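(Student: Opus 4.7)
My plan is to reduce the problem to a left-to-right sweep along the two layers. First, I would establish structural properties of a triangulated 2-layer graph: because every internal face is a triangle and the vertical edges do not cross, the vertical edges admit a natural ordering $e_1,\dots,e_t$ in which consecutive edges share exactly one endpoint, alternating between the two layers. Merging this with the horizontal edges on each layer yields a single left-to-right ordering of all edges on which placement decisions must be made.

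Given this ordering, I would design a greedy sweep that maintains, for each of the two layers, a pointer to the rightmost already-placed vertex together with its $x$-coordinate. At each step, the algorithm inspects the next edge $e$ and attempts to realize it by placing the corresponding vertex either flush against its left neighbor (for a horizontal edge) or aligned with the appropriate vertex on the other layer (for a vertical edge). If that placement would create a false adjacency with a rectangle already on the other layer, the algorithm instead introduces the minimal gap necessary to avoid the collision, forfeiting $e$; otherwise $e$ becomes a realized contact. Since every vertex and every edge is examined a constant number of times, the total running time is linear.

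To prove contact-maximality, I would use a local exchange argument. Suppose some optimal representation $\OPT$ realizes an edge $e'$ that the sweep skipped. Because the vertical edges form the described chain of triangles, any horizontal shift of the rectangles that would let $e'$ be realized must push some earlier rectangle rightward and thereby sacrifice a contact the sweep had already realized. A case analysis over the triangular faces incident to the sweep position at the moment of the decision should then show that this trade-off is never strictly beneficial, so the greedy output already matches $\OPT$ in its total count.

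The hard part will be this last optimality argument. A priori, the greedy rule could block two later contacts to gain one now, and it is exactly the triangulation of the 2-layer graph that rules out such configurations. Formalizing this in terms of the triangles incident to the sweep front, and showing that the $\varepsilon$-gaps introduced by the algorithm propagate in a controlled way rather than cascading into many missed later contacts, is where I expect the proof to require the most care.
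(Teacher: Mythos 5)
First, note that the paper does not prove this statement at all: it is imported verbatim from N\"ollenburg et al.\ (their Theorem~2) and used as a black box for the $1/2$-approximation, so there is no in-paper proof to match your sketch against. Judged on its own, your proposal has a genuine gap, and it sits exactly where you admit it does: the optimality of the greedy sweep is asserted, not proven. The claim that ``any horizontal shift that would let $e'$ be realized must push some earlier rectangle rightward and thereby sacrifice a contact the sweep had already realized'' is precisely the content of the lemma, and nothing in your sketch establishes it. A one-for-one exchange argument is not automatic here: the decision of \emph{where} within its feasible range to place a rectangle (not just whether to realize the current edge) influences which later vertical contacts on \emph{both} layers remain realizable, and the sweep fronts of the two layers interact. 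The standard way to make such a proof go through is to fix a canonical tie-breaking rule (e.g., always place each rectangle as far left as possible subject to validity) and then prove by induction a dominance invariant of the form ``after processing the $i$-th face, the greedy prefix realizes at least as many contacts as the corresponding prefix of any optimal solution, and its sweep front is at least as far left on both layers.'' You state neither the rule nor the invariant, and the subtle point is whether ``at least as far left on both layers'' can be maintained simultaneously; that is where the triangulation structure has to be used, and it is not a routine case analysis.

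Two smaller issues compound this. Your greedy rule ``introduce the minimal gap necessary to avoid the collision, forfeiting $e$'' quietly assumes that forfeiting the current edge is never worse than forfeiting a later one, i.e., that the greedy never trades two future contacts for one current contact -- again the crux, not a detail. And in the non-integral setting the $\varepsilon$-overlaps and $\varepsilon$-gaps you introduce are real quantities that can accumulate along the layer; you acknowledge the cascading concern but give no mechanism (such as re-normalizing to a leftmost placement after each step) to control it. As it stands the proposal is a plausible algorithmic plan, but the lemma's difficulty -- proving contact-maximality of a linear-time sweep -- remains open in your write-up.
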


In the following theorem, we split a $k$-layer graph into many 2-layer graphs and solve these optimally
with \cref{lem:2-layer-alg}. Half of these 2-layer graphs are vertex-disjoint, so their optimal solutions can be 
combined to a valid solution of the input graph.

\begin{theorem}\label{thm:1/2-approx}
	\mlc on triangulated graphs admits a 1/2-appro\-xi\-mation in linear time.
\end{theorem}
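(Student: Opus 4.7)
The plan is to reduce the $L$-layer problem to a collection of $2$-layer problems handled by \cref{lem:2-layer-alg}, and exploit the fact that an optimal global solution contributes no more to each two-layer slab than its local optimum. Concretely, partition the layers in two different ways: the \emph{odd pairing} $(1,2),(3,4),(5,6),\dots$ and the \emph{even pairing} $(2,3),(4,5),(6,7),\dots$, with a possible leftover single layer at the top or bottom (which contributes only horizontal edges and is trivial to handle). For each pairing, apply \cref{lem:2-layer-alg} independently to each two-layer slab, and return the resulting representation $\mathrm{ALG}_1$ or $\mathrm{ALG}_2$ with the larger number of realized contacts.

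The first thing I would verify is that the construction outputs a valid representation. Within a single pairing, the slabs are vertex-disjoint, so the local representations can be stacked vertically without interaction. Since each slab's representation is valid and no edges of $G$ run between non-consecutive layers, no false adjacencies can be created across slab boundaries. I would also check that the induced two-layer subgraph of an internally triangulated layered graph is itself internally triangulated, so that \cref{lem:2-layer-alg} applies: every bounded face of a layered planar triangulation already lies between two consecutive layers, and restricting to such a pair preserves this property.

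The core of the proof is the following counting argument. For an optimal representation, let $H_i$ be the number of horizontal contacts realized inside layer~$i$ and $V_{i,i+1}$ the number of vertical contacts realized between layers $i$ and $i+1$, so $\mathrm{OPT}=\sum_i H_i+\sum_i V_{i,i+1}$. Restricting the optimal representation to any slab $(i,i{+}1)$ yields a valid two-layer representation realizing exactly $H_i+H_{i+1}+V_{i,i+1}$ contacts, so by \cref{lem:2-layer-alg} the local optimum is at least this large. Summing over the slabs of each pairing gives
\begin{align*}
\mathrm{ALG}_1 &\;\ge\; \sum_i H_i \;+\; \sum_{i\text{ odd}} V_{i,i+1},\\
\mathrm{ALG}_2 &\;\ge\; \sum_i H_i \;+\; \sum_{i\text{ even}} V_{i,i+1},
\end{align*}
so $\mathrm{ALG}_1+\mathrm{ALG}_2\ge 2\sum_i H_i+\sum_i V_{i,i+1}\ge \mathrm{OPT}$, and therefore $\max(\mathrm{ALG}_1,\mathrm{ALG}_2)\ge \mathrm{OPT}/2$.

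For the running time, each two-layer slab is solved in time linear in its size by \cref{lem:2-layer-alg}, and the slabs in each pairing are vertex-disjoint, so the total cost over both pairings is $\Oh(n)$. I expect the only mildly tricky point to be spelling out that the restriction of a valid full representation to two consecutive layers really is a valid two-layer representation of the induced (still internally triangulated) subgraph, since all other steps are essentially bookkeeping.
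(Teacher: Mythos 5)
Your proposal is correct and follows essentially the same route as the paper: split the layers into the two pairings of consecutive two-layer slabs (the paper's $G_\mathrm{odd}$ and $G_\mathrm{even}$), solve each slab optimally via \cref{lem:2-layer-alg}, and argue by a charging/counting argument that the two pairings together cover every contact of an optimal representation at least once, so the better pairing realizes at least $\OPT/2$ contacts in linear time. Your explicit bookkeeping with $H_i$ and $V_{i,i+1}$ and the handling of a leftover single layer are just a slightly more detailed rendering of the paper's inequality $\sum_i \ALG_i \ge \sum_i \OPT_i \ge \OPT$.
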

\begin{proof}
    Let $G$ be an $L$-layered graph. 
    For $i=1,\ldots, L-1$, let $A_i$ be the subgraph of $G$ induced by the vertices on layers $i$ and $i+1$.
    We construct two groups of subgraphs $G_\mathrm{even}=\bigcup_{i \text{ even}} A_i$ and 
    $G_\mathrm{odd}=\bigcup_{i \text{ odd}} A_i$; see \cref{img:2-approx}.
    
    We solve every subgraph $A_i$, $i=1,\ldots, L-1$ optimally using \cref{lem:2-layer-alg}.
    Let $\ALG_i$ be the number of contacts realized for $A_i$. Let $\Gamma^*$ be an optimal drawing of $G$ that
    realizes $\OPT$ contacts, and let $\OPT_i$ be the number of contacts realized for $A_i$ in $\Gamma^*$.
	Since the 2-layer algorithm yields an optimal solution, it holds that $\ALG_i \ge \OPT_i$ for $i=1,\ldots,L-1$, so
	$\sum_{i=1}^{L-1}\ALG_i \ge \sum_{i=1}^{L-1}\OPT_i \geq \OPT$.
    Note that any two subgraphs $A_i,A_j\in G_\mathrm{even}$ are vertex-disjoint. Hence, we can obtain a valid solution for
    $G_\mathrm{even}$ with $\ALG_{\mathrm{even}} = \sum_{i : A_i \in G_\mathrm{even}}\ALG_i$ contacts by combining the computed solutions for the
    corresponding subgraphs. Analogously, we can obtain a valid solution for $G_\mathrm{odd}$ with 
     $\ALG_{\mathrm{odd}} = \sum_{i : A_i \in G_\mathrm{odd}}\ALG_i$ contacts.
	We get a 1/2-approximation by choosing the contacts realized by the instances corresponding to the larger of both sums: $\max\{\ALG^{G_\mathrm{even}}, \ALG^{G_\mathrm{odd}}\}\geq \OPT/2$. 

    For the running time, note that every vertex lies in at most two subgraphs, and \cref{lem:2-layer-alg} solves each subgraph optimally in time linear in its size.
\end{proof} 

\begin{figure}[!t]
	\begin{minipage}[b]{.47\textwidth}
		\centering
		\includegraphics{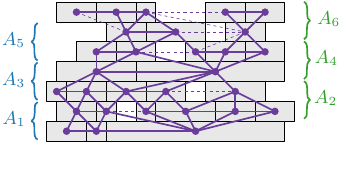}
		\caption{2-layered sub-graphs split into two groups $G_\mathrm{odd}$ (blue) and $G_\mathrm{even}$ (green).}
		\label{img:2-approx}
	\end{minipage}
	\hfill 
	\begin{minipage}[b]{.5\textwidth}
		\centering
		\includegraphics{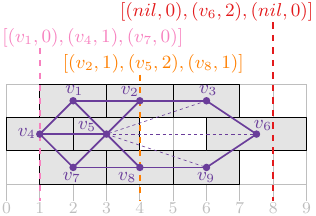}
		{\captionsetup{format=plain}
		\caption{Three cuts and respective $k$-tuples of an optimal solution of a $3$-layered graph.}}
		\label{img:dp-1}
	\end{minipage}
\end{figure}

\subsection{XP-Algorithm for \milc}\label{sec:dp-alg}
We now use a dynamic programming approach to solve \milc with bounded maximum rectangle width optimally.

\begin{theorem}\label{theorem:dp}
	\milc is solvable in time $\mathcal{O}(nW)^L$, where $W$ is the maximum rectangle width. If $W\in\poly(n)$, \milc lies in XP when parameterized by the number of layers $L$ of the input graph.
\end{theorem}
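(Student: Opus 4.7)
The plan is to solve \milc by a left-to-right sweep-line dynamic program over integer columns. At each integer column $x$, I maintain a \emph{state} that records, for each of the $L$ layers, either (i) the index $j$ of the vertex $v_{i,j}$ currently crossing the sweep line together with the number $t\in\{1,\dots,w_{i,j}\}$ of its unit-columns already placed, or (ii) a ``gap'' flag together with the index of the next vertex in the fixed ordering of layer $i$. Each layer contributes at most $O(nW)$ such local states, so the global state space at any column has size $O((nW)^L)$. The DP value $f(x,s)$ is the maximum number of realized contacts over all valid partial representations whose sweep-state at column $x$ equals $s$; the initialization is $f(0,s_0)=0$ with $s_0$ the state in which every layer is in a gap preceding its first vertex.

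The transition from column $x$ to $x+1$ is independent across layers: on each layer, either extend the active rectangle by one column, or finish it (once its counter reaches $w_{i,j}$) and choose between entering a gap and starting the next vertex of the layer at column $1$. Each such step may create two kinds of adjacencies, and they must be charged to the objective or else cause the transition to be rejected as producing a false adjacency. Horizontal contacts between $v_{i,j}$ and $v_{i,j+1}$ occur at the unique step at which the former finishes and the latter begins, and are credited (resp.\ forbidden) depending on whether $v_{i,j}v_{i,j+1}\in E$. Vertical contacts between rectangles on layers $i$ and $i+1$ may persist over many columns and must be counted exactly once; I credit (or forbid) such a contact at the first step at which the two active rectangles have nonempty $x$-overlap, after which further columns of the same overlap pass without further bookkeeping.

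Correctness will follow by induction on $x$: every valid representation of $G$ yields a sequence of admissible transitions whose cumulative score equals its number of realized contacts, and conversely any accepted sequence of transitions reconstructs a valid partial representation. The sweep terminates at the first column $x^{\max}\le nW$ at which every vertex has been fully placed, and the algorithm returns the maximum of $f(x^{\max},s)$ over terminal states. Since the sweep takes at most $nW$ steps and updates $O((nW)^L)$ states per step with $O(L)$ work per state, the total running time is $O((nW)^L)$ up to polynomial factors in $n$ and $L$, and since $nW\in\poly(n)$ whenever $W\in\poly(n)$, \milc then lies in XP parameterized by $L$. The main obstacle I expect is the bookkeeping for vertical contacts: one has to argue that ``credit at the first column of overlap'' is both correct (never missing nor double-counting a contact) and implementable by a local test inspecting only two adjacent layers per transition, so that the state space does not grow beyond $O((nW)^L)$.
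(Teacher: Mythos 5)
Your sweep-line DP is essentially the paper's own proof: the paper likewise records, per layer, which vertex an integer vertical cut passes through and how much of it already lies left of the cut, advances the cut one unit at a time, rejects transitions that create false adjacencies, and credits horizontal and vertical contacts exactly when a new rectangle is placed (the paper runs the recursion right-to-left and drops the explicit column index, so its table has only $\mathcal{O}(nW)^L$ entries). Your extra index by the column $x$ costs an additional factor $nW$ (giving $(nW)^{L+1}$ rather than the stated $\mathcal{O}(nW)^L$) and tacitly uses the unproven but harmless bound that some optimal drawing has width at most $nW$; neither affects the XP conclusion for $W\in\poly(n)$.
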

\begin{proof}
Let $G$ be an $L$-layered vertex-weighted graph with maximum weight~$W$. 
We want to define subproblems based on vertical cuts through integer $x$-coordinates; see \cref{img:dp-1}.
At each such cut through any valid representation, we can obtain the following information:
\begin{enumerate*}[label=(\roman*)]
	\item Which vertex has been cut at each layer (if any)?
	\item At what length was the corresponding rectangle cut (i.e., how much of the rectangle has already been drawn on the left of the cut)?
	\item If no vertex has been cut, which vertex will be drawn next on the specific layer?
\end{enumerate*}	

To formalize this, we use a tuple $(v,l)$ for each layer, where $v$ denotes the vertex that is being cut and $l$ denotes the length of the rectangle to the left of the cut. 
The tuple $(v,0)$ indicates that $v$ is next in line but has not yet been placed, while $(nil,0)$ means that there is no more vertex to be drawn on the corresponding layer. 
For every possible cut, we therefore obtain an $L$-tuple $[(v_1, l_1), (v_2, l_2), \ldots, (v_L, l_L)]$.
As each rectangle has at most width $W$, there are no more than $((n+1) \cdot (W+1))^L$ such $L$-tuples.
We store in an $L$-dimensional table $D$ for each $L$-tuple the maximum number of contacts that can be 
achieved to the {\em right} of the corresponding cut.

We set $D[(nil,0),\dots,(nil,0)]=0$, which corresponds to the right boundary of the drawing.
Consider any $L$-tuple $T$ and its corresponding cut. To calculate $D[T]$, we have to look at each cut $T'$ through
a solution one coordinate to the right. Consider any layer $i$ and the corresponding tuple $T_i=(v_i,l_i)$; see \cref{img:dp-2}.
If $T$ cuts through the middle of $v_i$, i.e., $0<l_i<w(v_i)$, then this rectangle has to continue, i.e., $T'_i=(v_i,l_i+1)$. If $T$ cuts through no vertex, i.e., $l_i=0$, then we can either place $v_i$, i.e., $T'_i=(v_i,1)$, or not place it yet, i.e., $T'_i=(v_i,0)$. Finally, if $T$ touches the right side of $v_i$, i.e., $l_i=w(v_i)$, then we can either immediately place the next vertex $v_i'$ (if it exists), i.e., $T'_i=(v_i',1)$, or not place it yet, i.e., $T'_i=(v_i',0)$. Doing this for every layer, we can find each possible next cut. For each such cut, we calculate whether it is feasible, i.e., whether the newly placed vertices have any false adjacencies. If it is not feasible, then we discard it; otherwise, we count how many edges are realized by the newly placed vertices, and thus calculate $D[T]$ from $D[T']$.
We can obtain the optimum solution for $G$ from $D[(v_1,0),\ldots,(v_L,0)]$, where $v_i$ is the leftmost vertex of layer $i=1,\ldots,L$.

\begin{figure}[!t]
	\centering
	\includegraphics[width=\textwidth]{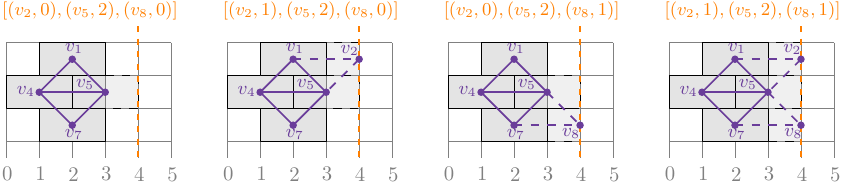}
	\caption{Possible assignments for an $L$-tuple at $x$-coordinate $4$ following an assignment of $[(v_1,2),(v_5,1),(v_7,2)]$ at $x$-coordinate $3$.}
	\label{img:dp-2}
\end{figure}

All in all, this leaves us with at most $(n \cdot (W+1)+1)^L$ different table entries that each take $\Oh(2^L)$ time to be calculated. The algorithm thus runs in $\mathcal{O}(nW)^L$ time.
To obtain the solution instead of the number of contacts, we can use an additional lookup table in the same time.
\end{proof}

\subsection{PTAS for \milc}\label{sec:ptas}

In the following, we use Baker's technique~\cite{Baker:94} to combine the ideas of the previously described 1/2-approximation (\cref{sec:2-approx}) and dynamic program (\cref{sec:dp-alg}).

\begin{lemma}\label{lem:approx}
	For every integer $\ell > 0$, \milc admits a $(1-\frac{1}{\ell})$-ap\-pro\-xi\-ma\-tion in $\Oh(nW)^{\ell+1}$ time, where $W$ is the maximum rectangle width.
\end{lemma}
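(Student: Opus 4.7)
My plan is to apply Baker's shifting technique to the XP algorithm of \cref{theorem:dp}. For each shift $s \in \{0, 1, \ldots, \ell-1\}$, I would cut the $L$ layers of $G$ into blocks by separating layers $j$ and $j+1$ whenever $j \equiv s \pmod{\ell}$, so that every resulting block consists of at most $\ell$ consecutive layers. I would solve the induced subproblem on each block optimally using \cref{theorem:dp} in $\Oh(nW)^\ell$ time per block, combine the block solutions into a candidate representation of $G$ for shift $s$, and finally return the best candidate over all $\ell$ shifts.

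For the approximation ratio, I would fix an optimal representation $\Gamma^\star$ with $\OPT$ contacts and observe that restricting $\Gamma^\star$ to any block is itself a valid representation of that block; hence the optimal value on each block (which the XP algorithm attains) is at least the number of $\Gamma^\star$-contacts inside that block. The $\Gamma^\star$-contacts not counted for shift $s$ are exactly the vertical contacts between layers $j$ and $j+1$ with $j \equiv s \pmod{\ell}$, since horizontal contacts and all other vertical contacts are internal to some block. Summing over all $\ell$ shifts, each vertical contact of $\Gamma^\star$ is missed in exactly one shift and each horizontal contact in none, so the total value of the $\ell$ candidates is at least $(\ell-1)\OPT$; by averaging, the best candidate has value at least $(1 - 1/\ell)\OPT$.

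For the running time, across all $\ell$ shifts there are $\Oh(L) = \Oh(n)$ blocks to solve, each in $\Oh(nW)^\ell$ time, for a total of $\Oh(nW)^{\ell+1}$ as claimed.

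I expect the main obstacle to be ensuring that the combined block solutions form a \emph{valid} representation of $G$, not merely of the individual blocks: rectangles on the topmost layer of one block and the bottommost layer of the next could overlap in their $x$-ranges and thereby touch, potentially creating false adjacencies for edges that were not considered by either block's optimization. Since gaps on a layer are permitted and there is no bound on the drawing width, I would resolve this by translating each block's subdrawing horizontally so that consecutive blocks occupy disjoint $x$-intervals; this only removes (possible) contacts crossing block boundaries, which are precisely the contacts already accounted for in the approximation bound.
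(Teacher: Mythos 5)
Your proposal is correct and follows essentially the same route as the paper: Baker's shifting over the layers with the dynamic program of \cref{theorem:dp} as the exact subroutine, where your per-shift partition into blocks of at most $\ell$ consecutive layers and the averaging over the $\ell$ shifts is just a reorganization of the paper's argument, which solves all $\ell$-layer windows $A_i$, groups them into $\ell$ vertex-disjoint families, and double-counts horizontal and vertical contacts across the windows. Your explicit horizontal translation of blocks to avoid false adjacencies across block boundaries addresses a point the paper leaves implicit when combining the vertex-disjoint subgraph solutions.
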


\begin{proofWithMath}
    If $\ell\ge L$, then we can solve the problem optimally in $\Oh(nW)^L$ time using \cref{theorem:dp}.
	Otherwise, similar to \cref{thm:1/2-approx}, we split the graph into multiple subgraphs of $\ell$ layers each, which we will then solve using the dynamic program described in \cref{theorem:dp}.     We assume that $L$ is evenly divisible by $\ell$; otherwise, we add empty dummy layers to the top, increasing $L$ by at most factor 2. For technical reasons, we treat layer 0 to be the same as layer~$L$.
 
    For $i=1,\ldots, L$, let $A_i$ be the subgraph of $G$ induced by the vertices on the $\ell$ layers $i, \ldots, i+\ell-1 \mod L$. We can solve each of these subgraphs optimally using \cref{theorem:dp} in $\Oh(nW)^\ell$ time. Since $L\in \Oh(n)$, this takes $\Oh(nW)^{\ell+1}$ time in total. Let $\ALG_i$ be the number of contacts for $A_i$ obtained this way.
	
    Let $\Gamma^*$ be an optimal representation of $G$ that realizes $\OPT$ contacts, 
    let $\OPT_i$ be the number of horizontal contacts realized for each layer $i$ in $\Gamma^*$, 
    and let $\OPT_{i,i+1}$ denote the number of vertical contacts between layers $i$ and $i+1$ in~$\Gamma^*$. 
    Since we solved $A_i$ optimally, we have 
    
    \[\ALG_i \geq \sum_{j=i}^{(i+\ell-2)\!\!\!\!\mod L}(\OPT_j + \OPT_{j,j+1}) + \OPT_{i+\ell-1}.\] 

	 Horizontal contacts of each layer are covered by $\ell$ subgraphs and vertical contacts between pairs $\{i,i+1\}$ of layers are covered by $\ell-1$ subgraphs. Therefore,
	\[
	\sum_{i=1}^L \ALG_i \geq \ell \sum_{j=1}^{L}\OPT_j + (\ell-1) \sum_{j=1}^{L-1} \OPT_{j,j+1} \geq (\ell-1)\OPT
	\]

     We then partition these subgraphs into $\ell$ groups $G_1\ldots,G_{\ell}$ such that $G_i = A_i\cup A_{i+\ell}\cup A_{i+2\ell}\cup \ldots \cup A_{i+L-\ell}$; see \cref{img:approx}. Note that the subgraphs in a group are vertex-disjoint,  so combining the optimum solutions for $A_i,A_{i+\ell},\ldots, A_{i+L-\ell}$ gives an optimum solution for $G_i$ with $\ALG_{G_i} = \sum_{j=0}^{ L/\ell-1}\ALG_{i+j\ell}$ contacts. Further, every subgraph lies in exactly one group, so $\sum_{i=1}^{\ell} \ALG_{G_i} = \sum_{i=1}^L \ALG_i$.

     We now choose $1\le j\le \ell$ such that $\ALG_{G_j}= \max_{i=1}^{\ell}\ALG_{G_i}$.
     Then, \\[-1ex]
		\[\ALG_{G_j}= \max_{i=1}^{\ell}\ALG_{G_i} \geq \frac{1}{\ell}\sum^{L}_{i=1}\ALG_{G_i} \geq (1-\frac{1}{\ell})\OPT.\tag*{\qed}\]
\end{proofWithMath}

For any $\varepsilon{>}0$, by choosing $\ell{=}\lceil 1/\varepsilon\rceil$, \cref{lem:approx} provides a PTAS if $W{\in}\poly(n)$.

\begin{theorem}
	For every $\varepsilon >0$, \milc admits a $(1{-}\varepsilon)$-ap\-pro\-xi\-ma\-tion in $\Oh(nW)^{1+\lceil\frac{1}{\varepsilon}\rceil}$ time, where $W$ is the maximum rectangle width.
\end{theorem}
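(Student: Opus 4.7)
The plan is to derive the theorem as a direct instantiation of \cref{lem:approx}. The sentence immediately preceding the theorem already points to this: set $\ell = \lceil 1/\varepsilon \rceil$, invoke the lemma, and verify that both the approximation ratio and the running time stated in the theorem match.

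First I would note that $\ell$ so chosen is a positive integer, so \cref{lem:approx} applies and produces a $(1-\tfrac{1}{\ell})$-approximation. Then I would check the approximation factor: since $\ell = \lceil 1/\varepsilon\rceil \ge 1/\varepsilon$, we have $1/\ell \le \varepsilon$, hence $1-\tfrac{1}{\ell} \ge 1-\varepsilon$, so the algorithm is a $(1-\varepsilon)$-approximation. Next I would plug $\ell$ into the running time bound: the lemma gives $\Oh(nW)^{\ell+1}$ time, and substituting $\ell = \lceil 1/\varepsilon\rceil$ yields exactly the claimed $\Oh(nW)^{1+\lceil 1/\varepsilon\rceil}$.

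There is essentially no obstacle here; the only thing worth flagging is that the lemma produces an algorithm depending on the integer parameter $\ell$ rather than directly on $\varepsilon$, so the statement of the theorem is obtained simply by expressing $\ell$ in terms of $\varepsilon$. For completeness, I would also mention the degenerate case $\varepsilon \ge 1$, which forces $\ell = 1$ and in which the claim $1-\varepsilon \le 0$ is trivially matched by any valid (in particular, empty) representation; but the interesting regime $0 < \varepsilon < 1$ is covered entirely by the computation above.
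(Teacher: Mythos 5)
Your proposal is correct and matches the paper exactly: the theorem is obtained by instantiating \cref{lem:approx} with $\ell=\lceil 1/\varepsilon\rceil$, observing $1-\tfrac{1}{\ell}\ge 1-\varepsilon$, and substituting into the running-time bound $\Oh(nW)^{\ell+1}$. The extra remark on the degenerate case $\varepsilon\ge 1$ is fine but not needed.
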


\begin{figure}[!t]
	\centering
	\includegraphics{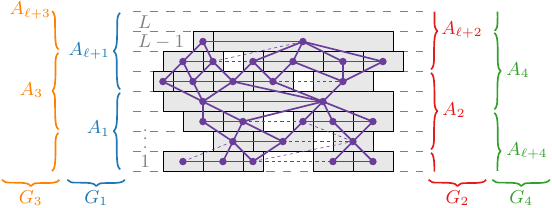}
	\caption{An $L$-layered graph with one dummy layer split into $L$ subgraphs of $\ell$ layers, partitioned into $\ell$ groups, for $L=8$ and $\ell=4$.}
	\label{img:approx}
\end{figure}

\section{Conclusion}\label{ch:discussion}

We have proved that $k$-\ilc and $k$-\lc are NP-complete, and provided an XP-algorithm parameterized by the number of layers and a PTAS for \milc when rectangle widths are polynomial in $n$. 
Several interesting problems remain open, for example:
\begin{enumerate*}[label=(\arabic*)]
    \item Is there an FPT-algorithm parameterized by the number of layers for \milc?
    \item Is there a PTAS for \milc for which the running time does not depend on the maximum rectangle width?
    \item What can we do if rectangles can have different (integer) heights, thus spanning more than one layer? 
\end{enumerate*}

\bibliographystyle{splncs04}
\renewcommand{\doi}[1]{\href{https://dx.doi.org/#1}{\texttt{doi:#1}}}
\bibliography{abbrv,literatur}

\clearpage
\appendix

\noindent{\large\bf Appendix}
\section{Basic Definitions}\label{ch:prelim}

In this section, we will review some basic concepts of graph drawing (\cref{sec:graphs}) and parameterized complexity (\cref{sec:paracompl}).

\subsection{Graphs and their Contact Representations}\label{sec:graphs}

A drawing $\Gamma$ of a graph $G = (V,E)$ is a mapping that assigns each vertex $v \in V$ a point in $\mathbb{R}^2$, and each edge $\{u,v\} \in E$ a simple open curve $\Gamma(u,v)$ with endpoints $\Gamma(u)$ and $\Gamma(v)$.
A drawing of a graph is called \emph{planar} if there are no intersections of distinct edges. A graph that admits a planar drawing is called \emph{planar}.
For each vertex in a planar drawing, the clockwise order of incident edges is fixed.
Such a clockwise ordering of edges around vertices defines a \emph{planar embedding}. Each planar embedding can admit multiple planar drawings. Two planar drawings with the same embedding are called equivalent. The connected regions that the edges of a planar graph divide the plane into are called \emph{faces}. The outermost, unbounded face is called \emph{outer face}, all other faces are \emph{inner faces}. If every inner face is a triangle, the graph is called \emph{internally triangulated}. See \cref{img:contactrepr-graph} for an example of an internally triangulated graph.
A \emph{layered graph drawing} is a type of graph visualization technique where the vertices of a graph are arranged in 
horizontal layers, and the edges are drawn as segments or curves connecting the vertices~\cite{Tamassia:13}. The goal is 
usually to minimize the number of edge crossings.

In a \emph{contact representation} of a graph, vertices are depicted as geometric objects and two such objects touch, i.e., their boundaries intersect, if and only if there is an edge between their corresponding vertices\cite{Noellenburg:21}. \cref{img:contactrepr} shows a graph with ten vertices and a contact representation where each vertex is drawn as a rectangle. 
\begin{figure}[!t]
	\centering
	\subcaptionbox{\label{img:contactrepr-graph}}{\includegraphics[page=1]{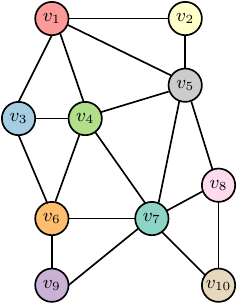}}
    \hfill
    \subcaptionbox{\label{img:contactrepr-rep}}{\includegraphics[page=2]{images/contact.pdf}}
	\caption{(a) An internally triangulated graph and (b) a rectangle contact representation.}
	\label{img:contactrepr}
\end{figure}

\subsection{Parameterized complexity}\label{sec:paracompl}
Many problems that are relevant in practice are NP-complete, and thus, no polynomial-time algorithms are known to solve them. As we show in \cref{ch:complexity}, the problem of drawing layered word clouds is NP-complete, too.
It is only natural to consider different strategies to make those problems more feasible, at least for instances that meet certain criteria. 

Let $\Pi$ be an NP-hard problem. In the framework of parameterized complexity, each instance of $\Pi$ is associated with a {\em parameter} $k$. Here, the goal is to confine the combinatorial explosion in the running time of an algorithm for $\Pi$ to depend only on $k$. Formally, we say that $\Pi$ is {\em fixed-parameter tractable (FPT)} if any instance $(I, k)$ of $\Pi$ is solvable in time $f(k)\cdot |I|^{\Oh(1)}$, where $f$ is an arbitrary computable function of $k$. A weaker request is that for every fixed $k$, the problem $\Pi$ would be solvable in polynomial time. Formally, we say that $\Pi$ is {\em slice-wise polynomial (XP)} if any instance $(I, k)$ of $\Pi$ is solvable in time $f(k)\cdot |I|^{g(k)}$, where $f$ and $g$ are arbitrary computable functions of $k$. 
For more information on parameterized complexity, we refer to books such as \citeapp{Cygan2015,Downey:13,Fomin:19}.

\section{Related Work on Word Clouds}\label{ch:related}

\subsection{Different kinds of word clouds}\label{sec:kinds-of-wordles}
Besides classical and semantic word clouds, other approaches to design word clouds have been proposed. Their aim is to improve the readability, decrease the potential for misinterpretation or add additional information. In the following we discuss three such variants of word clouds: ShapeWordles, which try to fit words into a given shape, geo word clouds, which do the same, but use maps as the shapes to fit, and additionally place words such that they are semantically related to their placement on the map, as well as Metro-Wordles, which arrange word clouds on metro maps, to again convey spatial information.

\subsubsection{Shape Wordle}
Shape Wordles are word clouds in which the words are arranged in such a way that they fit a given shape. See Figure \ref{img:teaparty} (left) for an example of a word cloud generated with \url{ https://wordart.com/}, using again the first chapter of ``Alice's Adventures in Wonderland'' and fitting the words to the shape of a teacup.
The difficulty with those kinds of word clouds lies in appropriately filling a shape, as to make it easily recognizable and visually pleasing while adhering to the desired word sizes given by the frequency of the respective words. 
In the given example those problems were avoided by placing a word multiple times in different sizes. This though allows for easy misinterpretation.

Wang et al. \citeapp{Wang2020} thus introduce a technique to generate ShapeWordles that aim to resolve the aforementioned difficulties in a more semantically consistent manner. In their approach, they use a shape-aware Archimedean spiral to determine word placement. They also consider a multi-centric layout, i.e. shapes that consist of multiple components (Figure \ref{img:teaparty} (right)). Here, too, they use an Archimedean spiral to fill each component. Given non-convex components this might not lead to satisfying results, as large parts of a component might be left blank. They thus consider multiple centers for such shapes, basically segmenting a component into multiple parts. Each part then gets assigned some of the words supposed to fill the given shape greedily and Archimedean spirals are calculated accordingly.

Another area of interest with ShapeWordles is to place words such that their placement aligns with their semantics. In the following, we discuss geo word clouds and Metro-Wordles, which aim to do exactly that.

\begin{figure}[!t]
	\centering
	\includegraphics[width=\linewidth]{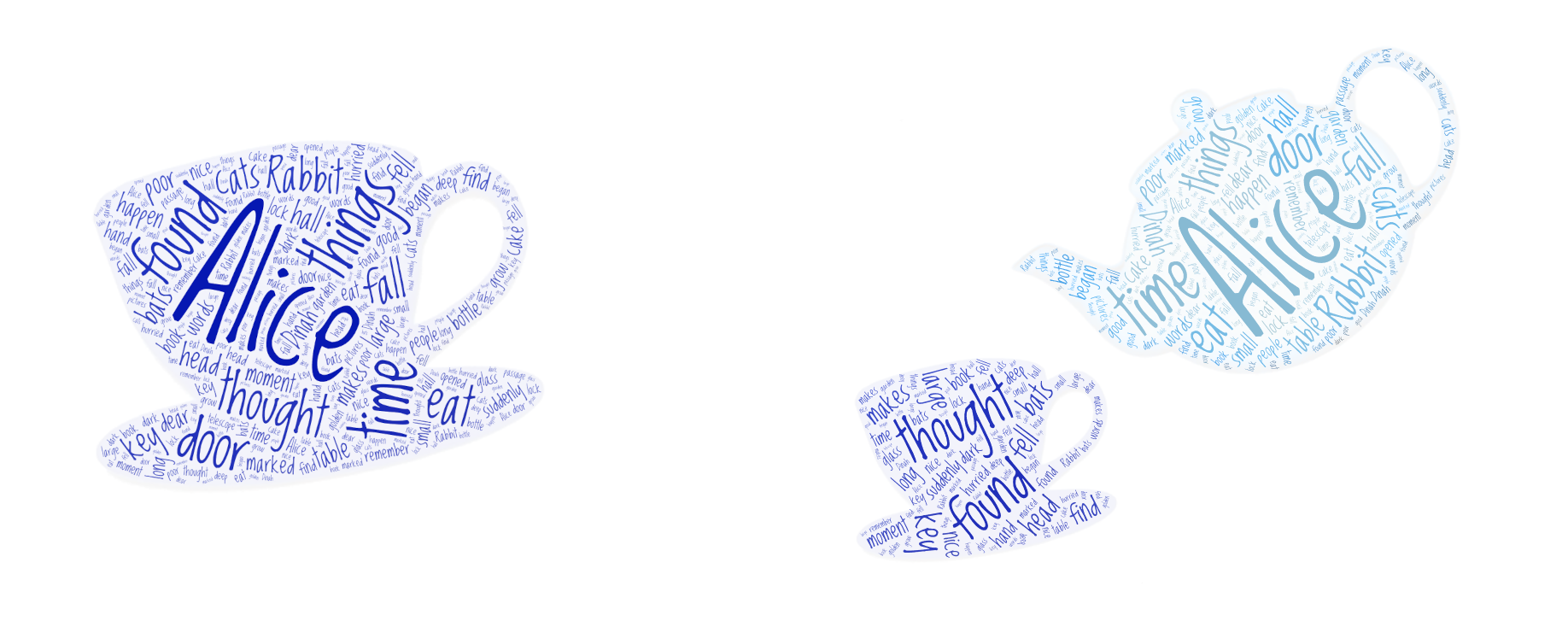}
	\caption{ShapeWordle created using the first chapter of ``Alice's Adventures in Wonderland'' (left), ShapeWordle using the same text, but a shape with multiple components (right).}
	\label{img:teaparty}
\end{figure}

\subsubsection{Geo Word Clouds}
Other than information about word frequency, geo word clouds also depict spatial information. They were introduced in \citeapp{Buchin2016} and, as the name suggests, are aimed at visualizing geographic regions as word clouds (Figure \ref{img:geowordle}). That is, words are linked to points inside a geographic region $M$ and the goal is to draw a word cloud in such a way that words are displayed as closely as possible to their assigned points. Multiple words can be assigned the same point, and words can be assigned multiple points.

At the same time, the words are to be placed in such a way that they ``draw'' the specific region, e.g. if the geographic region is a country, the geo word cloud will resemble the respective shape of said country.
Apart from that, the size of a word still measures its frequency, and words can be rotated to a user-specified degree. 
Buchin et. al \citeapp{Buchin2016} also use color to prohibit misinterpretation of different colored words. If the same word appears multiple times in the word cloud, it will have the same color at every occurrence.
Furthermore, colors are chosen from a Hue-Chroma-Luminance (HCL) color model to ensure that color doesn't give the illusion of importance. Colors are, however, not used to group words semantically or bear any other meaning.

The proposed algorithm works in a greedy fashion. Points that are assigned the same word are clustered, and for each cluster, a single shape of the word is created. The size of the word is chosen in such a way that it covers the cluster well. To place the words, they are then sorted according to their size, and bigger words are placed first.

\begin{figure}[!t]
	\centering
	\includegraphics[scale=1]{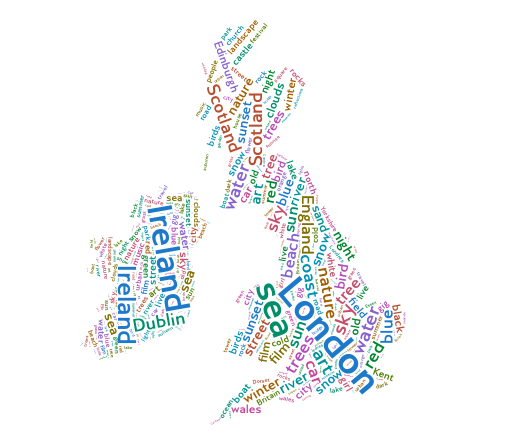}
	\caption{Geo word cloud of Flickr tags in Great Britain and Ireland (\protect\citeapp{Buchin2016}, Figure 8).}
	\label{img:geowordle}
\end{figure}

\subsubsection{Metro-Wordle}
This variant combines traditional word clouds and metro maps \citeapp{Li2018}. Like geo word clouds, Metro-Wordles are therefore spatially informative.

In a Metro-Wordle, a metro map is used as a canvas to depict word clouds containing keywords related to the given city (Figure \ref{img:metrowordle}).
Metro lines serve as a divider, splitting a city into multiple areas, each of which containing word clouds that were generated using keywords from descriptions, reviews, etc. about points of interest (POI) in the given area.

There are four steps in generating a Metro-Wordle. First, as with all word clouds, the displayed (key-)words need to be extracted. Keywords can be not only names of locations, but also sights and things like region-specific food.
In a second step, the metro map has to be drawn such that the topological structure of metro stations is kept and sub-regions (for which word clouds should be computed) have to be defined. The next step is to generate word clouds and place them in the designated areas. As with geo word clouds, words should preferably be placed relative to their actual location on the map. There is, however, some leeway with this, to make sure that word clouds fit into their assigned regions.
Lastly, \citeapp{Li2018} designed an interactive visualization for users to explore a given city via a Metro-Wordle.

\begin{figure}[!t]\label{fig:metrowordle}
	\centering
	\includegraphics[scale=1]{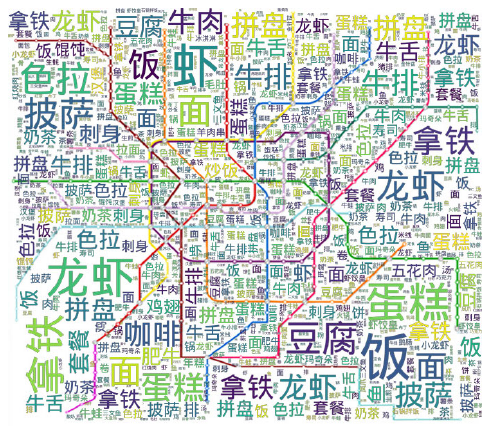}
	\caption{Shanghai Metro-Wordle (\protect\citeapp{Li2018}, Figure 11)}
	\label{img:metrowordle}
\end{figure}

\subsection{Algorithmic Approaches for Computing Semantic Word Clouds}\label{sec:algo-wordles}
This section describes different algorithms to draw semantic word clouds. All of them use some force-directed model at some point, most of them in post-processing.
The last two Sections propose approximation algorithms, the others follow heuristic approaches.

\subsubsection{Context-Preserving (Dynamic) Word Cloud Visualization}\label{sec:cpwcv}
This approach, described in \citeapp{Cui2010}, was introduced to illustrate content evolution in a set of documents over time. It consists of two components, namely a \emph{trend chart viewer} and a \emph{word cloud generator}, of which we will consider only the latter.

Words are initially placed based on a dissimilarity matrix $\Delta$, where each entry $\delta_{i,j}$ describes the similarity between words $i$ and $j$ using $n$-dimensional feature vectors. Depending on the criterion chosen, $n$ denotes either the number of time points of the documents (Importance criterion, Co-Occurrence criterion) or the total number of words (Similarity criterion).
Using multidimensional scaling (MDS) each $n$-dimensional vector is then reduced to a $2$-dimensional point.

After the initial placement, a force-directed model is applied to reduce white space while keeping semantic relations between words. For this, a triangulated mesh is constructed via a \emph{Delaunay triangulation} \citeapp{Berg1997}, i.e. a triangulated graph $G=(V,E)$ is created where words correspond to vertices and for each clique of three vertices $v_1, v_2, v_3 \in V$ it holds that within the cycle through $v_1, v_2, v_3$ there lies no other vertex.

An adapted force-directed algorithm then rearranges vertices using attractive forces between vertices to reduce empty space and repulsive forces to prevent overlapping of words. A third force is used to attempt to keep the mesh planar, as to preserve semantic relationships between words. However, given that planarity might waste space, and conversely, non-planarity does not imply that semantic relationships are lost, keeping the mesh planar is not a strict constraint. Figure \ref{img:cpwcv} visualizes the complete pipeline for creating semantic word clouds in this way.

\begin{figure}[!t]\label{fig:cpwcv}
	\centering
	\includegraphics[width=\linewidth]{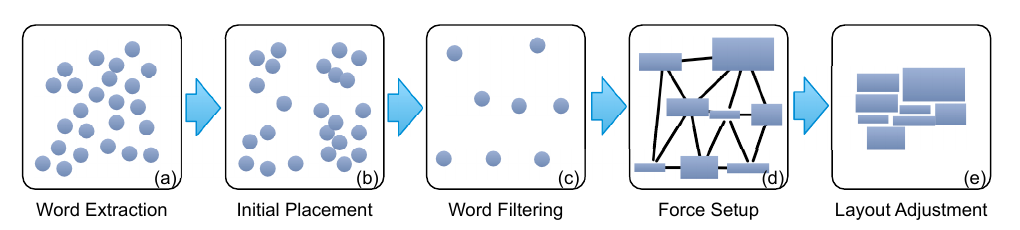}
	\caption{Visualization of the whole pipeline for creating semantic word clouds (\protect\citeapp{Cui2010}, Figure 2); (a) initial set of words extracted from multiple documents with different timestamps; (b) initial placement based on the dissimilarity matrix and using MDS; (c) removing words that are unrelated for a specified timestamp; (d) constructing a triangulated graph; (d) applying the force-directed algorithm.}
	\label{img:cpwcv}
\end{figure}

\subsubsection{Seam carving}\label{sec:seam-carving}
Seam carving describes an algorithm usually used to resize pictures \citeapp{Avidan2007}. ``Seams'', in this case, are horizontal or vertical paths of pixels along the picture that are deemed the least important. When resizing the image to be smaller, the least important seams get removed in each step.

Wu et al. \citeapp{Wu2011} used seam carving to generate semantic word clouds that are more compact and more consistently preserve semantic relations, as opposed to those created with the force-directed approach discussed in Section \ref{sec:cpwcv}.
The algorithm starts by extracting keywords from a collection of texts for which a similarity value is calculated. Based on the calculated values a word cloud is then created. A force-directed approach, focusing only on repulsive forces between words, is used afterward to remove any overlaps of words.
Next, keywords that don't appear in the text, which the word cloud is to be created for, are removed. This likely results in a sparse word cloud and thus seam carving is used to remove unnecessary white space.  The goal is to remove white space in such a way that visualized semantic relations between words remain the same. A so-called \emph{energy function}, based on a Gaussian distribution, is used to determine which semantic relations between words are more important to keep. The energy function is calculated for different regions that the layout is partitioned into using the bounding boxes of the words. A seam then is a horizontal or vertical path of connected regions of low energy. Figure \ref{img:seamcarving} illustrates the removal of a seam. 

The algorithm repeatedly removes seams until there exist no further seams that can be removed.

\begin{figure}[!t]
	\centering
	\includegraphics[width=\linewidth]{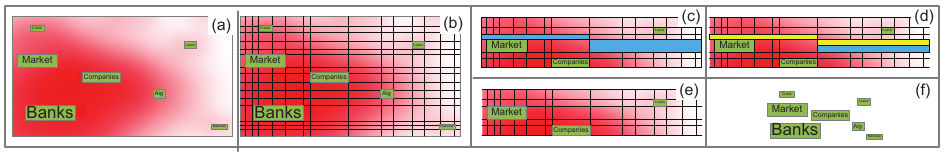}
	\caption{Visualization of the removal of a seam  (\protect\citeapp{Wu2011}, Figure 2); (a) initial layout with Gaussian importance field; (b) Partitioning of the layout using bounding boxes; (c) identifying of a seam; (d) adapt seam to have equal width throughout; (e) result after removing the seam; (f) final result after removing all seams.}
	\label{img:seamcarving}
\end{figure}

\subsubsection{Inflate-and-Push}
With \emph{Inflate-and-Push} Barth et al. \cite{Barth2014} introduced a heuristic approach to calculate layouts for word clouds using multidimensional scaling. The method uses a dissimilarity matrix $\Delta$ for which each entry is defined as
\[\delta_{i,j} = \frac{1-sim_{i,j}}{S}\] 
where $S > 0$ is some constant by which all word rectangles are scaled down.

Next, all word rectangles get iteratively \emph{inflated}, that is, the dimensions of the rectangles are increased by 5\% in each iteration. As this can lead to overlaps of words, the force-directed model described in Section \ref{sec:cpwcv} is used after every iteration to remove any overlaps. This is the ``push'' phase of the algorithm.

\subsubsection{Star Forest}\label{sec:star-forest}
This approach approximates a solution by splitting the input graph $G = (V,E)$ into disjoint stars \cite{Barth2014}.
It consists of three steps. First, $G$ has to be split into a star forest. Then, for every star, a solution has to be generated. Lastly, the generated solutions for each star have to be combined to form the final result.

The first step was realized by Barth et al. \cite{Barth2014} via a greedy algorithm. They pick the vertex $v \in V$ with the maximum sum of weights of incident edges, that is $\max_{v \in V} \sum_{u \in V} sim(u,v)$, where $sim(u,v)$ denotes the similarity or semantic relatedness of the words corresponding to vertices $u$ and $v$. The chosen vertex is then, together with a subset of its neighbors, removed from the graph. This step is repeated until there are no vertices left in $G$.

Choosing the subset of neighbors for a star center is done by reducing the problem to \textsc{Knapsack} and using the polynomial-time approximation scheme proposed in \citeapp{Lawler1979}.

The idea of the reduction is to place four rectangles $B_1, B_2, B_3, B_4$ on each corner of the rectangle $B_0$, corresponding to the central vertex, and then use the sides of $B_0$ as four bins, with the capacity of each bin being either the width or the height of $B_0$.
Four \textsc{Knapsack} instances are then solved sequentially, always removing items/vertices that have already been placed in a bin (Figure \ref{img:star-forest} (a)).

When combining the calculated solutions for the stars, semantic relationships between stars are to be preserved. For each pair of stars, a similarity value thus has to be defined. For two stars $s_1$ and $s_2$, the similarity value is simply the average similarity between the words in $s_1$ and $s_2$.
Multidimensional scaling is then used to create an initial layout from the similarity values and followed up by a force-directed algorithm to reduce white space between stars (Figure \ref{img:star-forest} (b)).
\begin{figure}[t]
	\centering
	\includegraphics[width=\textwidth]{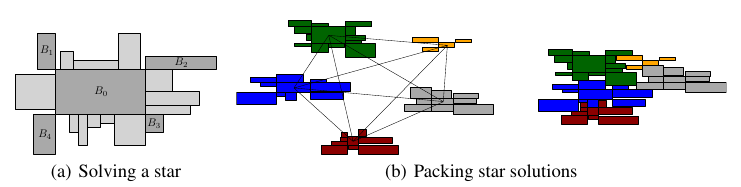}
	\caption{Realizing a star (a); Combining all stars in final result (b) \protect\cite{Barth2014}}
	\label{img:star-forest}
\end{figure}

A different approach for approximating semantic word clouds using star forests has been proposed in \cite{Barth:14}. Here, instead of \textsc{Knapsack}, the more general problem \textsc{Maximum Generalized Assignment Problem} (\textsc{Gap}) is used for the reduction.
As opposed to \textsc{Knapsack}, \textsc{Gap} allows for multiple bins with different capacity constraints, and sizes and values of items may differ depending on the bin they are put in. 
This allows for the problem to be solved in one instance, instead of one for each side of the central rectangle.
If rectangles may be rotated, a reduction to \textsc{Multiple Knapsack}, where items have the same size and value regardless of the bin they are put in, would also be sufficient.

\subsubsection{Cycle Cover}\label{sec:cycle-cover}
This approach is similar to the one described in the previous Section, in so far as the input graph, again, is split into multiple graphs with a more simple structure. That is, we want to find a vertex-disjoint cycle cover that maximizes edge weight and then realize those cycles \cite{Barth2014}.

To find the cycle cover with the maximum weight, the input graph $G = (V,E)$ is first transformed into a bipartite graph $H = (V', E')$, on which a maximum weighted matching can be more easily computed. 
To create $H$, vertices $v'$ are added for each $v \in V$  and edges $\{u',v\}, \{u,v'\}$ with weights $sim_{u,v}$ for each edge $\{u,v\} \in E$ to $G$.
From a maximum weighted matching on $H$, one can then obtain a set of vertex-disjoint paths and cycles in $G$.

To realize a cycle with $n$ vertices, its vertices are split into two paths $(v_1,...,v_{t-1})$, $(v_n,...,v_{t+1})$ and a single vertex $v_t$, such that $\sum_{i < t}w_i < \sum_{i \leq n}\frac{w_i}{2}$, where $w_i$ is the width of rectangle $v_i$.
Both paths then get aligned on a shared horizontal line. The first path $(v_1,...,v_{t-1})$ is aligned from left to right on its bottom sides, the second path $(v_n,...,v_{t+1})$ from left to right on its top sides. The leftover vertex $v_t$ is then placed with contacts to both $v_{t+1}$ and $v_{t-1}$ (Figure \ref{img:cycle-cover} (a)).

To realize a path, vertices $v_1$ and $v_2$ are initially placed next to each other. Following vertices $v_i$ are added in such a way that they are in contact with $v_{i-1}$ on its first available side, going in clockwise order and starting from the side of the contact of $v_{i-1}$ and $v_{i-2}$ (Figure \ref{img:cycle-cover} (b)).

As cycles might get very long and thus paths that create a more spiral-like, compact layout might be preferable, cycles that contain more than ten vertices are also converted into paths by removing the edge with the lowest weight.

Once all cycles and paths have been created a force-directed algorithm is used analogously to the one described in Section \ref{sec:star-forest} to reduce whitespace.

\begin{figure}[t]
	\centering
	\includegraphics[scale=1]{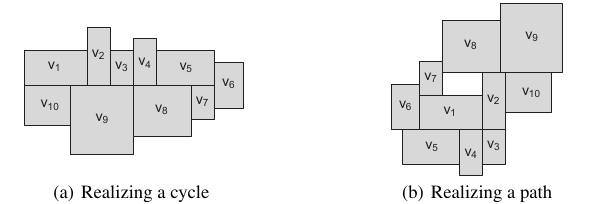}
	\caption{Realizing (a) cycles and (b) and paths.\protect\cite{Barth2014}}
	\label{img:cycle-cover}
\end{figure}

\section{Detailed construction of the split gadget.}\label{app:split_gadget}

Split gadgets are used to split variable values, so that they can be used in multiple clauses. 
We only ever duplicate variables to the right. 
Each propagation of a variable lies in a tunnel, i.e. between the two walls enclosing the corresponding variable gadget. 
Let $x_a$ be a variable such that its variable assignment ends at vertex $v_a$.
To duplicate the value of $x_a$, a second tunnel is created to the right of the existing one by adding a third wall.

The goal now is to make the variable value in the new tunnel dependent on the original one. That is, after splitting, we want two vertices $v_a'$ and $v_a''$, one in each tunnel, such that $v_a'$ realizes the same value as $v_a$ and that $v_a''$ either has the same value as $v_a$ or the value \false; see \cref{img:app_split_gadget}.
Note that this will never cause a clause gadget that is not supposed to be satisfied, to be satisfied. 

To split the values, we use a vertex $v_m$ with rectangle width $R(v_m) = 8$ that we place between layers of the inner wall.
The width of $v_m$ is set such that the rectangle can either be placed blocking half of the left tunnel or blocking half of the right tunnel, but not both.
It cannot block either tunnel completely, as this would lead to forbidden contacts; see \cref{img:split_gadget4}.

Assuming a positive value for $v_a$, where the corresponding rectangle lies on the left side in the tunnel, $v_m$ has to be placed blocking half of the tunnel $v_a$ lies in, which then allows  for $v_a''$ to be placed either on the left or on the right in its tunnel, making its value either \true or \false; see \cref{img:split_gadget1} and \ref{img:split_gadget5}.
The vertex $v_a'$ on the other hand has to stay aligned with $v_a$, as otherwise there will be fewer contacts realized.

In case of a negative value for $v_a$, $v_m$ can only be placed blocking half of the right tunnel, which in turn forces $v_a''$ to also be assigned the value \false, as there would otherwise be forbidden contacts; see \cref{img:split_gadget2} and \ref{img:split_gadget3}.
Again, $v_a'$ has to stay in line with $v_a$ to maximize contacts.

\begin{figure}[!t]
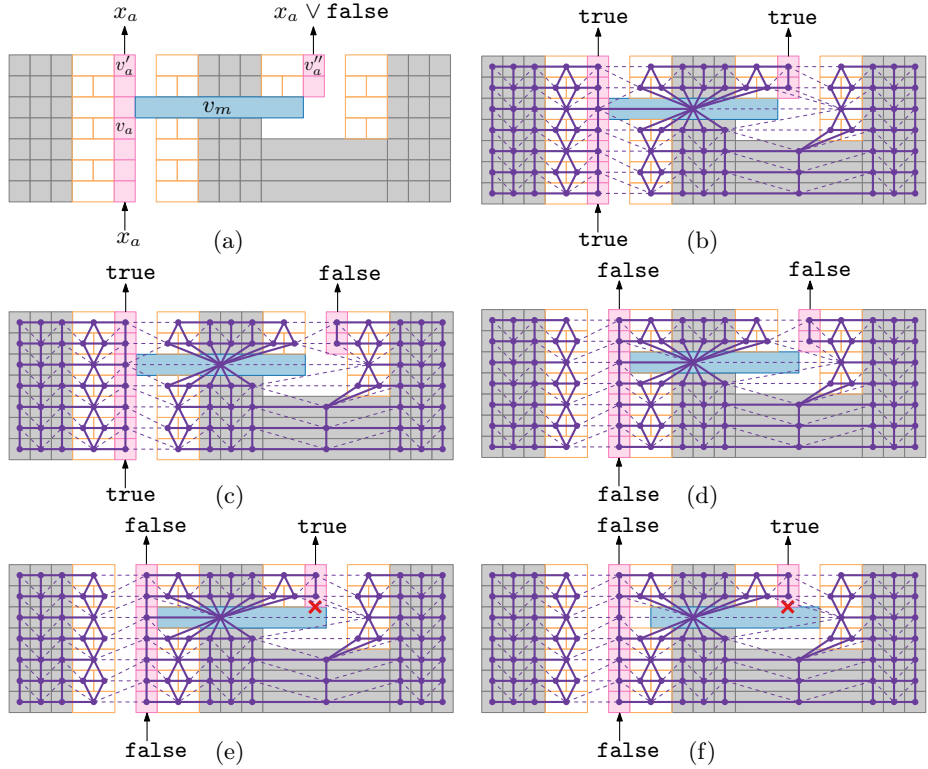

	\centering
 	\subcaptionbox{\label{img:split_gadget-contact-app}}{\includegraphics[page=2]{images/tri_split_7.pdf}\vspace{-1.5em}}
    \hfill
	\subcaptionbox{\label{img:split_gadget1}}{\includegraphics[page=3]{images/tri_split_7.pdf}\vspace{-1.5em}}
	\subcaptionbox{\label{img:split_gadget5}}{\includegraphics[page=8]{images/tri_split_7.pdf}\vspace{-1.5em}}
    \hfill
	\subcaptionbox{\label{img:split_gadget2}}{\includegraphics[page=5]{images/tri_split_7.pdf}\vspace{-1.5em}}
	\subcaptionbox{\label{img:split_gadget3}}{\includegraphics[page=6]{images/tri_split_7.pdf}\vspace{-1.5em}}
    \hfill
	\subcaptionbox{\label{img:split_gadget4}}{\includegraphics[page=7]{images/tri_split_7.pdf}\vspace{-1.5em}}
 
	\caption{(a) Split gadget (b) and underlying graph; if $x_a$ is \true, the duplicated value is either \true or (c) \false;
	(d) valid representation of the split gadget; if $x_a$ is \false, the duplicated value has to be \false; (e) invalid representation  of the split gadget if $x_a$ is \false; (f) invalid representation of the split gadget. }
	\label{img:app_split_gadget}
	
\end{figure}

\clearpage
\section{NP-completeness of $k$-\lc}
\label{app:klc-np-complete}

\LPCrownNP*
\label{thm:lp-crown-np*}

\begin{proof}[Sketch]
    To show NP-completeness of $k$-\lc on planar graphs, we can essentially use the same gadgets we used for $k$-\ilc. That is, the vertex set and the corresponding rectangles stay the same---both in their position within a layer and in rectangle width---but we remove all edges that are not needed for the gadgets to behave as intended. Specifically, we remove edges that in the previous construction were not realized in any valid configuration of the gadgets.
    Losing the restriction to integer coordinates generally allows for rectangles to realize more contacts.
    By removing previously unused edges we keep the possible number of contacts for a rectangle equivalent to that of the construction in \cref{ch:complexity}.

    \paragraph{Variable gadget.} Consider first the variable gadget shown in \cref{img:variable_nt1}. 
    Only $v_x$ (and $u_x$ respectively) could realize $4$ ($3$) contacts not only by realizing one horizontal contact and three vertical contacts, but also by not realizing the horizontal contact but instead $4$ vertical contacts. This however would lead to fewer contacts in total; see \cref{img:variable_nt2} for an example.
    Every other vertex realizes all possible vertical contacts and thus no other placement of the rectangles would yield more contacts.

    \begin{figure}[h!]
	\centering
	\subcaptionbox{A representation with maximum number of contacts.\label{img:variable_nt1}}[.47\textwidth]{\includegraphics[page=1]{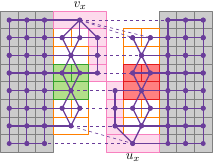}}
    \hfil
	\subcaptionbox{A shifted representation has one less contact.\label{img:variable_nt2}}[.47\textwidth]{\includegraphics[page=2]{images/nt_variable2.pdf}}
	\caption{The variable gadget in the proof of \cref{thm:lp-crown-np}}
	\label{img:nt_variable}
\end{figure}

    \paragraph{Clause gadget.}For the clause gadget, we keep all edges between the slider and the bottom layer and $v_su_a, v_su_b, v_s,u_c$,  as well as $v_sw$ where $w$ is one of the outer vertices of a wall for the upper layer; see \cref{img:nt_clause}. Similarly to the construction in \cref{ch:complexity}, this allows for the slider $v_s$ to realize $4$ contacts in case of a positive clause, and $3$ otherwise.
    
\begin{figure}[h!]
	\centering
	\includegraphics[width=\textwidth]{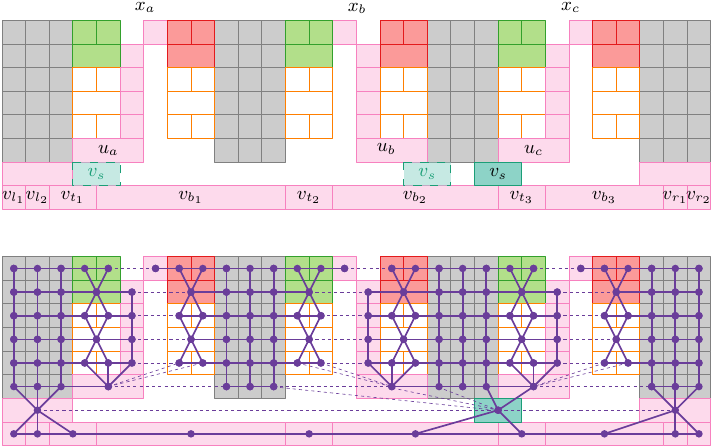}
	\caption{Contact representation (top) and underlying graph (bottom) for a clause gadget in the proof of \cref{thm:lp-crown-np}.
	Unrealized edges between $v_s$ and vertices of bottom layers as well as edges between $v_s$ and vertices of the variable gadget for $x_a$ are omitted for readability.}
	\label{img:nt_clause}
\end{figure}

    \paragraph{Split gadget.}Lastly, in the split gadget $v_m$ keeps the same edges as in \cref{img:split_gadget}, allowing for the same configurations as before; see \cref{img:nt_split}.
    Moving $v_m$ by a non-integral amount into either tunnel would decrease the number of realized contacts, as either $v_m$ loses a horizontal contact, or its left neighbor does.
\end{proof}

\begin{figure}[h!]
	\centering
	\includegraphics[page=2]{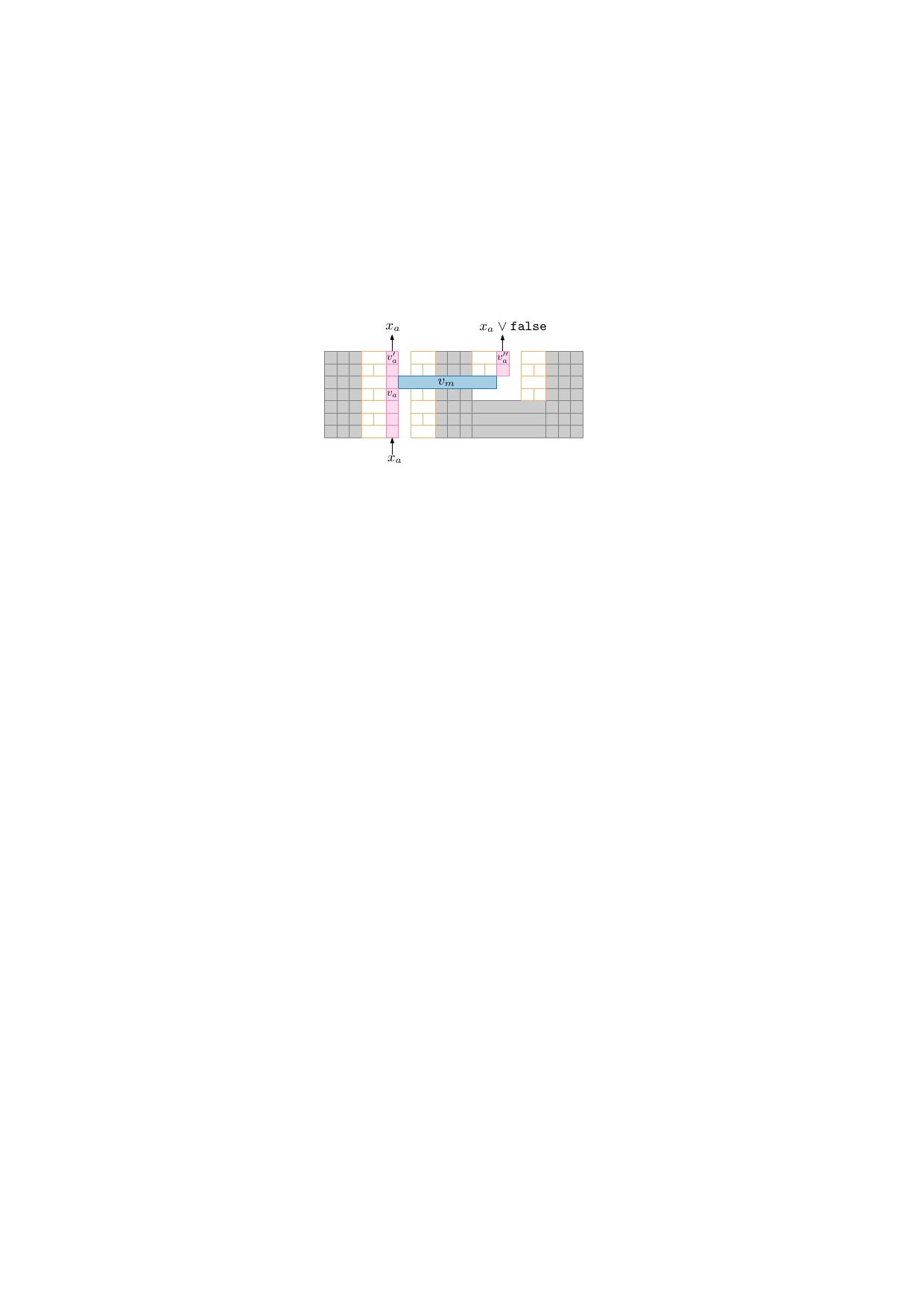}
	\caption{The split gadget in the proof of \cref{thm:lp-crown-np}.}
	\label{img:nt_split}
\end{figure} 

\clearpage
\section{Additional figures for \cref{ch:complexity}}\label{app:figures}

\begin{figure}[h]
	\centering
	\includegraphics[width=\textwidth]{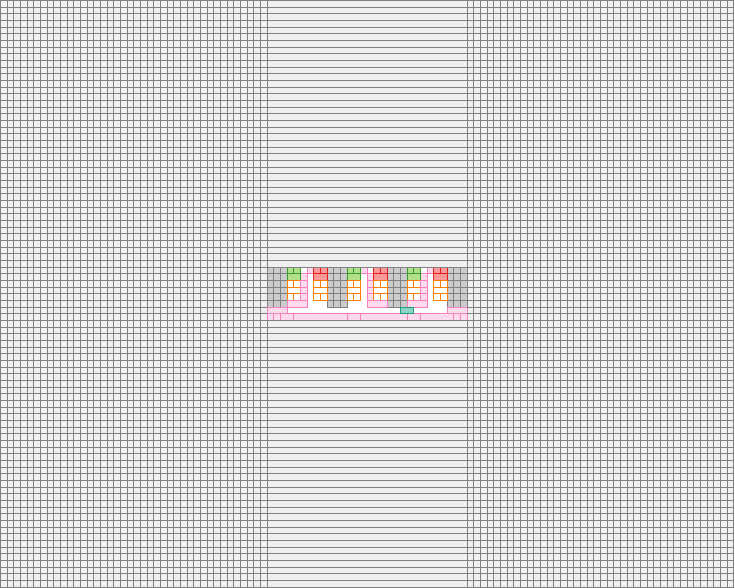}
	\caption{Frame around a formula consisting of one clause.}
	\label{img:frame}
\end{figure}

\begin{sidewaysfigure}
	\centering
	\includegraphics[width=\textwidth]{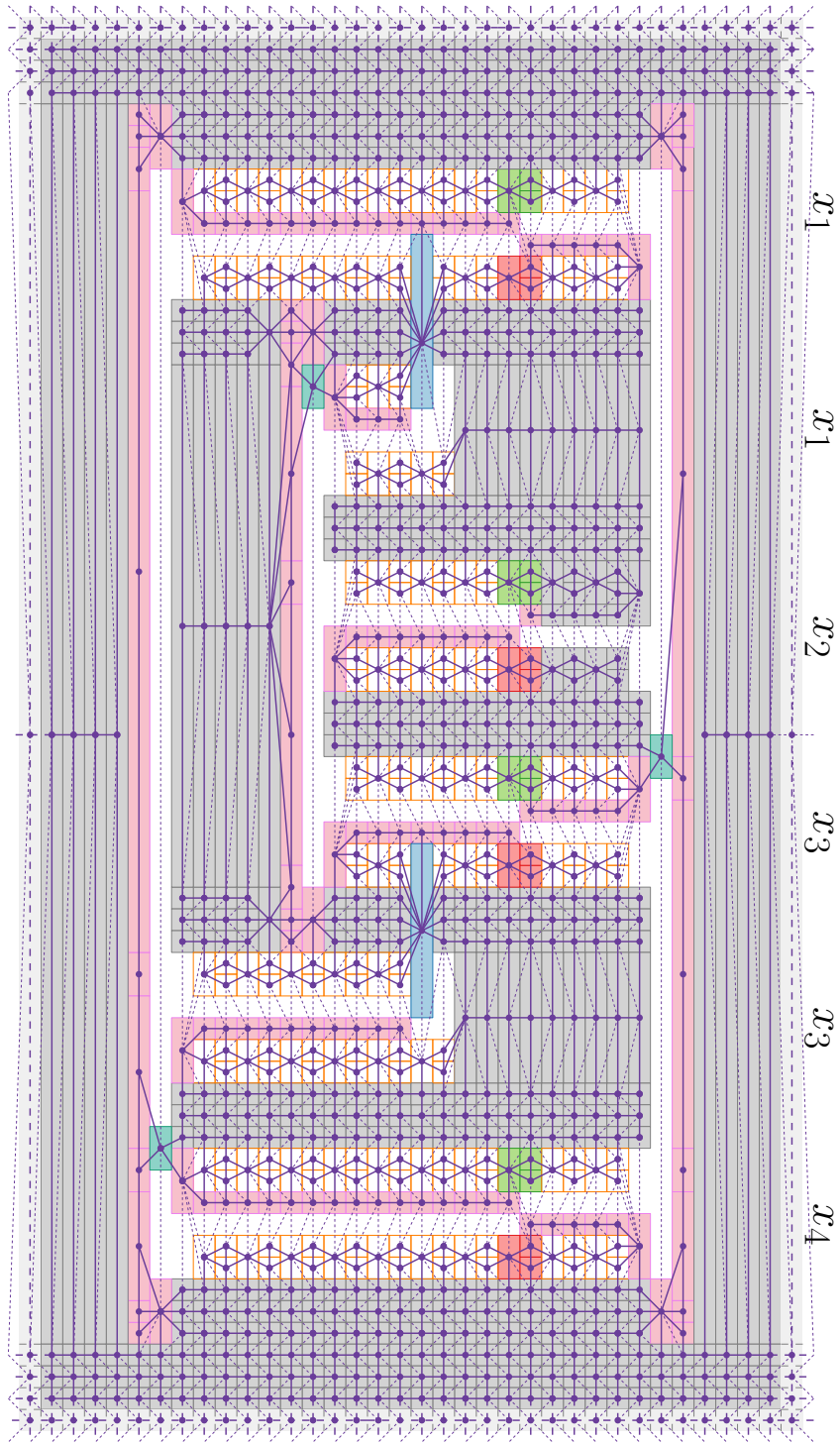}
	\caption{Contact representation for the boolean formula $B$ with variable set $\{x_1, x_2, x_3, x_4\}$, clauses $\{\overline{x_1},\overline{x_2}, \overline{x_3}\}, \{\overline{x_1}, \overline{x_3}, \overline{x_4}\}$ and $\{x_1, x_3, x_4\}$ and variable assignment $x_1 = \false, x_2 = \true, x_3 = \true, x_4 = \false$ (Edges between $v_s$ and above/below layer left out for readability purposes).}
	\label{img:example_formula_large}
\end{sidewaysfigure}

\clearpage
\bibliographystyleapp{splncs04}
\bibliographyapp{abbrv,literatur}

\end{document}